\documentclass{article}
\usepackage{paralist}
\usepackage{cite}
\usepackage{pgf,tikz}
\usetikzlibrary{arrows}
\usepackage{float}
\usepackage{algorithmic}
\usepackage{amssymb}
\usepackage{amsthm,amsmath}
\usepackage{tabularx}
\usepackage{hyperref}

\newcommand{\tw}{\ensuremath{\mathrm{t\!w}}}

\newtheorem{theorem}{Theorem}
\newtheorem{lemma}{Lemma}
\newtheorem{corollary}{Corollary}
\newtheorem{observation}{Observation}

\begin{document}

\title{Planar Induced Subgraphs of Sparse Graphs}

\author{Glencora Borradaile \and David Eppstein \and Pingan Zhu}


\maketitle
\begin{abstract}
We show that every graph has an induced pseudoforest of at least $n-m/4.5$ vertices, an induced partial 2-tree of at least $n-m/5$ vertices, and an induced planar subgraph of at least $n-m/5.2174$ vertices.  These results are constructive, implying linear-time algorithms to find the respective induced subgraphs.
We also show that the size of the largest $K_h$-minor-free graph in a given graph can sometimes be at most $n-m/6+o(m)$.
\end{abstract}

\pagestyle{plain} 

\section{Introduction}

\emph{Planarization}, a standard step in drawing non-planar graphs, involves replacing edge crossings with new vertices to form a planar graph with paths that represent the original graph's edges.  \emph{Incremental planarization}, does this by finding a large planar subgraph of the given graph, and then adding the remaining features of the input graph one at a time~\cite{BatEadTam-98}. Thus, it is of interest to study the algorithmic problem of finding planar subgraphs that are as large as possible in a given graph. Unfortunately, this problem is NP-hard and, more strongly, MAX-SNP-hard~\cite{CalFerFin-Algs-98}. A trivial algorithm, finding an arbitrary spanning tree, achieves an approximation ratio of~$\frac13$, and by instead searching for a partial 2-tree this ratio can be improved to~$\frac25$~\cite{CalFerFin-Algs-98}. The equivalent complementary problem, deleting a minimum number of edges to make the remaining subgraph planar, is fixed-parameter tractable and linear time for any fixed value of the parameter~\cite{KawRee-STOC-07}.

In this paper we study a standard variant of this problem: finding a large planar \emph{induced} subgraph of a given graph. In the context of the planarization problem, one possible application of finding this type of planar subgraph would be to apply incremental planarization in a drawing style where edges are represented as straight-line segments. A planar induced subgraph can always be drawn without crossings in this style, by F\'ary's theorem, after which the partial drawing could be used to guide the placement of the remaining vertices.
As with the previous problem, the induced planar subgraph problem is NP-hard, but again there is a linear-time fixed-parameter tractable algorithm for the equivalent problem of finding the smallest number of vertices to delete so that the remaining induced subgraph is planar~\cite{Kaw-FOCS-09}. 

Because of the difficulty of finding an exact solution to this problem, we instead seek worst-case guarantees: what is the largest size of a planar induced subgraph that we can guarantee to find within a graph of a given size? In this we are inspired by a paper of Alon, Mubayi, and Thomas~\cite{alon2001}, who showed that every \emph{triangle-free} input graph with $n$ vertices and $m$ edges contains an induced forest with at least $n-m/4$ vertices. This is tight, as shown by an input graph in the form of $n/4$ disjoint copies of a $4$-cycle. Induced forests are a special case of induced planar subgraphs, and so this result guarantees the existence of an induced planar subgraph of $n-m/4$ vertices. As we show, an analogous improvement to the one in the approximation ratio for planar subgraphs can be obtained by seeking instead an induced partial 2-tree. Rossmanith~\cite{BKMM13} has posed the question: does every graph have an induced planar subgraph of size $n-m/6$?  We shrink the gap on the worst-case bounds for the size of a planar induced subgraph by showing that every graph (not necessarily triangle-free) has an induced planar subgraph of $n-m/5.2174$ vertices and that there exist graphs for which the largest induced planar subgraph is not much larger than $n-m/6$ vertices.

\subsection{New results}

We prove the following results:

\begin{theorem} \label{thm:pseudoforest}
Every graph with $n$ vertices and $m$ edges has an induced pseudoforest with at least $n-\frac{2m}{9}$ vertices.
\end{theorem}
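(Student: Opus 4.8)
The plan is to prove that for an arbitrary \emph{maximal} induced pseudoforest $S$ of $G$, writing $D=V(G)\setminus S$, one has the edge-count inequality $2m\ge 9|D|$; this is exactly $|S|=n-|D|\ge n-\tfrac{2m}{9}$, and since a maximal induced pseudoforest can be built greedily one vertex at a time (maintaining the components of the current subgraph together with a flag recording which of them already contain a cycle), it also yields the promised linear-time algorithm.

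First I would record the local structure forced by maximality. Write $F=G[S]$; each component of $F$ is a tree or is unicyclic. For $v\in D$ let $a_v$ be the number of neighbours of $v$ in $S$, let $r_v$ be the number of distinct components of $F$ meeting $N(v)$, and let $c_v$ be the number of those components that are unicyclic. Computing the cyclomatic number of the component of $G[S\cup\{v\}]$ containing $v$ shows that it equals $a_v-r_v+c_v$; hence $G[S\cup\{v\}]$ is a pseudoforest unless $a_v-r_v+c_v\ge 2$, so by maximality every $v\in D$ satisfies $a_v-r_v+c_v\ge 2$. In particular $a_v\ge 2$ always, and when $a_v=2$ the two neighbours of $v$ either lie in a common unicyclic component of $F$ or lie in two distinct unicyclic components.

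Next I would run a discharging argument. Give each edge of $G$ two tokens, for $2m$ tokens in all, and redistribute them so that every vertex of $D$ ends up with at least $9$: a vertex $v\in D$ keeps both tokens on each edge joining it to $S$ and one token on each edge joining it to $D$, which already suffices when $a_v\ge 5$ or when $v$ has many neighbours in $D$, while the residual demand of the low-degree vertices of $D$ is met from the tokens on the edges of the components of $F$ that witness their obstructions (for example, a vertex with $a_v=2$ draws tokens from the edges of the cycle of $F$ that carries its two neighbours). Summing ``$v$ receives at least $9$'' over $v\in D$ against ``at most $2m$ tokens were given out'' produces $2m\ge 9|D|$.

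I expect the main obstacle to be proving that this redistribution is feasible --- that no edge is asked to surrender more than its two tokens --- in the configurations that are, or are close to, tight. The extremal example is a disjoint union of copies of $K_{3,3}$, in which $S$ takes a $4$-cycle from each copy and deletes the other two vertices, so that $2m\ge 9|D|$ holds with equality and the cycle edges of $F$ must be spent to the last token. Handling this rigorously seems to require a per-component estimate bounding, in terms of $|K|$, how many tokens the vertices of $D$ that rely on a component $K$ of $F$ may legitimately claim from the edges of $K$. Cheap preprocessing helps thin out the cases --- recursing past any vertex of degree at most $1$, which can always be re-added to a pseudoforest, and deleting any vertex of degree at least $5$, which pays for itself since $\tfrac{2\cdot 5}{9}>1$ --- but the unicyclic-component case remains the delicate point.
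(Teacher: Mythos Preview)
Your plan hinges on the claim that \emph{every} maximal induced pseudoforest $S$ satisfies $|S|\ge n-\tfrac{2m}{9}$, equivalently $2m\ge 9|D|$. This claim is false, so no discharging scheme can rescue it.

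Take the six-vertex graph $G$ obtained from a triangle $u_1u_2u_3$ by adding, for each edge $u_iu_j$, a new vertex $v_{ij}$ adjacent to $u_i$ and $u_j$ only. Then $n=6$, $m=9$, and all degrees are $2$ or $4$, so your preprocessing does nothing. The set $S=\{u_1,u_2,u_3\}$ induces a triangle, hence a pseudoforest; adding any $v_{ij}$ produces a $4$-vertex graph with $5$ edges (two triangles sharing the edge $u_iu_j$), which has cyclomatic number $2$ and is not a pseudoforest. So $S$ is maximal with $|S|=3$ and $|D|=3$, giving $2m=18<27=9|D|$. Meanwhile $n-\tfrac{2m}{9}=4$, and indeed $\{v_{12},v_{23},v_{13},u_1,u_2\}$ (for instance) induces a unicyclic graph on five vertices, so the theorem itself is fine here; it is only the maximality route that fails.

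The underlying issue is that your ``residual demand from the cycle edges of $F$'' is unbounded: a single short cycle in $F$ can obstruct many vertices of $D$ simultaneously, each demanding several tokens from the same few edges. The paper avoids this by \emph{not} relying on maximality at all. It runs a constructive reduction procedure---contracting degree-$1$ and non-triangle degree-$2$ vertices, and applying a detailed case analysis for degree-$2$ vertices in triangles and for $3$- and $4$-regular residues (including a delicate treatment of $K_4$'s in the $4$-regular case)---and shows via an amortised charge argument that each step has non-negative net charge. In particular, the algorithm makes specific, non-greedy choices about \emph{which} vertex to delete, and those choices are what make the $4.5$ ratio attainable.
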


\begin{theorem} \label{thm:sp}
Every graph with $n$ vertices and $m$ edges has an induced subgraph with treewidth at most 2 and with at least $n-\frac{m}{5}$ vertices.
\end{theorem}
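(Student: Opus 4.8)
The plan is to prove \autoref{thm:sp} by induction on $|V(G)|+|E(G)|$, repeatedly replacing $G$ by a strictly smaller graph and then lifting an induced subgraph of treewidth at most $2$ back to $G$, losing at most $\tfrac15$ of a vertex for every edge deleted along the way; write $f(G)=n-\tfrac m5$ for the target size. A first batch of moves is essentially \emph{free}. If $G$ is disconnected we recurse on its components. If $G$ has a vertex $v$ of degree at most $2$, we delete it, solve the remainder, and put $v$ back: when $\deg(v)\le 1$ it returns as a pendant or isolated vertex, and when $\deg(v)=2$ with neighbours $u,w$ it returns either attached by one edge or subdividing the induced edge $uw$, none of which raises treewidth above $\max(\cdot,2)$; parallel edges and loops are discarded outright. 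Every such move loses one vertex while lowering $f$ by at least $\tfrac45$, so it even banks $\tfrac15$ of slack. Hence we may assume $G$ is connected with minimum degree at least $3$; and if $G$ has a vertex of degree at least $5$, we delete it, which removes at least $5$ edges and so drops $f$ by at most $1$, exactly covering the deletion. Thus every vertex of $G$ has degree $3$ or $4$.

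The next moves are more delicate and rely on \emph{cascades}: deleting a vertex can drop a neighbour to degree $2$, triggering the free clean-up above and removing still more edges. One checks that a degree-$3$ vertex with at least two degree-$3$ neighbours can be deleted so that the resulting chain of suppressions removes at least $5$ edges altogether --- three from the vertex itself, one more for each suppressed neighbour, each of the latter being reconstructed on the way back up --- so that the budget balances; and similarly that a degree-$4$ vertex with a degree-$3$ neighbour can be deleted with a cascade removing at least $5$ edges. Performing these exhaustively forces $G$ to be $4$-regular: a surviving degree-$3$ vertex would have all three neighbours of degree $3$ (by the second move) and hence contradict the first. What remains is a connected $4$-regular graph of treewidth at least $3$.

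This last case is the crux, and the step I expect to be hard. In a $4$-regular graph a single vertex deletion removes only $4$ edges --- a full $\tfrac15$ of a vertex short of paying for itself --- and once the graph is triangle-sparse, deletions no longer set off useful cascades, so the naive recursion stalls exactly at the margin. Small residual graphs can be settled by hand, but closing the gap in general seems to demand a genuinely global argument about connected $4$-regular graphs of treewidth at least $3$: either a bounded-size structural configuration whose removal provokes a cascade long enough to break even, or --- abandoning the induction here --- a direct amortised or discharging analysis of a \emph{maximum} induced treewidth-$2$ subgraph $H$, in which every vertex $v$ outside $H$ closes a $K_4$-minor when added to $H$, and one must charge at least $5$ edges to $v$ (its own incident edges together with edges of that $K_4$-minor inside $H$) without overspending any edge of $H$. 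Either way the argument has to be exactly tight on disjoint copies of $K_5$ --- and of $K_6$, which the high-degree deletion above reduces to $K_5$ --- where $f(G)$ equals the true optimum, so no waste is permitted. Note also that the Alon--Mubayi--Thomas bound of $n-\tfrac m4$ for induced \emph{forests} in triangle-free graphs~\cite{alon2001}, being weaker than $n-\tfrac m5$, does not even settle the triangle-free subcase on its own.
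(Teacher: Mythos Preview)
Your reduction to the connected $4$-regular case is sound, and your instinct that a step-by-step induction stalls there is correct: deleting any single vertex removes only four edges, leaving you exactly $\tfrac15$ of a vertex short, and no local configuration reliably triggers a long enough cascade. The paper closes this gap by abandoning per-step balance and amortising across a \emph{phase}. After deleting one vertex from a $4$-regular graph (deficit $\tfrac15$), the graph has degree-$3$ vertices; from then on one always deletes a degree-$4$ vertex adjacent to a degree-$3$ vertex, and each such deletion followed by the suppression of the resulting degree-$2$ neighbour breaks even. The phase can end only when the graph becomes regular again, and the paper observes that the \emph{last} step of the phase must overshoot: if the graph returns to $4$-regular, that final deletion had all four neighbours of degree~$3$, yielding four suppressions and a surplus of $\tfrac35$; if it falls to $3$-regular, the very next deletion (three edges, three suppressions) yields a surplus of $\tfrac15$. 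Either surplus repays the opening deficit. So the missing idea is simply to track total charge from one regular state to the next rather than to balance each deletion in isolation --- no global structure theorem and no discharging on a maximum subgraph is needed.

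A smaller but genuine issue: your degree-$2$ move as written (``delete $v$, recurse on $G-v$, put $v$ back'') fails when both neighbours $u,w$ lie in the returned set $H$ but $uw\notin E(G)$. Adjoining a degree-$2$ vertex with non-adjacent neighbours can create a $K_4$ minor: take $G$ to be $K_4$ on $\{a,b,u,w\}$ with the edge $uw$ removed, together with a new vertex $v$ adjacent to $u$ and~$w$; your procedure returns all five vertices, but $\tw(G)=3$. The fix, which is what the paper does, is to \emph{contract} an edge at $v$ rather than delete $v$: this inserts the edge $uw$ into the smaller instance, so that restoring $v$ afterwards genuinely subdivides an edge and keeps treewidth at most~$2$. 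Your later use of the word ``suppressions'' suggests you may have intended this, but the first paragraph does not say it.
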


\begin{theorem} \label{thm:planar}
Every graph with $n$ vertices and $m$ edges has an induced planar subgraph with treewidth at most 3 and with at least $n-\frac{23m}{120}$ vertices.
\end{theorem}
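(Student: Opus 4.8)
The plan is to prove this the way Theorems~\ref{thm:pseudoforest} and~\ref{thm:sp} presumably are: by a linear-time reduction procedure that builds the induced subgraph explicitly, and I expect the argument to extend the proof of Theorem~\ref{thm:sp} by one unit of treewidth. Equivalently, it suffices to find a set $S$ with $|S|\le\frac{23m}{120}$ such that $G-S$ is planar and has treewidth at most~$3$; I will think of the bound as an amortized budget in which every edge of $G$ carries a credit of $\frac{23}{120}$ that must pay for the vertices removed into~$S$.

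I would start with the cost-free reductions. A vertex $v$ of degree at most~$3$ should be kept: eliminate $v$ after adding fill-in edges among its neighbours to make $N(v)$ a clique, then recurse on the resulting graph $G'$, which has one fewer vertex and, since at most $\binom{3}{2}=3$ fill-in edges replace the $3$ deleted ones, no more edges, so the budget does not shrink. Treewidth at most~$3$ survives the lift because $N(v)$ is a clique of size $\le 3$ in $G'$, hence lies in a single bag of a width-$\le 3$ tree decomposition, into which $v$ can be reinserted in a new adjacent bag of size $\le 4$. Planarity needs more care: an induced solution $H'$ for $G'$ might turn $N(v)$ into a \emph{separating} triangle, leaving nowhere to redraw $v$ on a common face. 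I would address this by strengthening the inductive statement (tracking that the relevant triangles can be made facial, or carrying a designated boundary face), or by branching on whether $H'$ retains all of $N(v)$: when it omits a neighbour, $v$ has degree $\le 2$ in $H'$ and the lift is immediate. Exhausting this rule leaves a graph of minimum degree at least~$4$; since graphs of treewidth $\le 3$ are $3$-degenerate, such a graph has treewidth $\ge 4$ unless it is empty, in which case we are done with $S=\varnothing$.

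The core of the proof is the treatment of a graph $G^\ast$ with $\delta(G^\ast)\ge 4$. Deleting a vertex of degree $\ge 6$ destroys $\ge 6>\frac{120}{23}$ edges and is paid for with credit to spare, so the work concentrates on vertices of degree~$4$ and~$5$, whose deletion cannot be afforded one vertex at a time. For these I would develop a list of local reduction rules keyed on neighbourhood structure---triangles, diamonds, and books through a low-degree vertex, two adjacent low-degree vertices with common neighbours, and the like---each of which either keeps one or more vertices through a safe (possibly fill-in) elimination, or deletes a small connected set while destroying enough edges, in every case preserving planarity and treewidth $\le 3$ of the retained part. A discharging scheme then pushes the surplus credit of cheaply eliminated vertices, of degree-$\ge 6$ deletions, and of the edges internal to the final solution, onto the expensive degree-$4$ and degree-$5$ deletions; tuning the rule list and the discharging weights so that the tightest surviving configuration comes out to exactly $\frac{23}{120}$ is where this particular constant would emerge (the denominator $120$ suggests it is the optimum of a small linear program over the unavoidable configurations).

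The main obstacle, as in comparable results, is proving the rules exhaustive: a minimum-degree-$\ge 4$ graph to which no rule applies must be shown impossible, and every discharging inequality must be verified for all surviving configurations---all while simultaneously respecting planarity and the treewidth-$3$ bound, which genuinely interact through the fill-in edges. Pushing this case analysis through is the bulk of the work, and it is also what would leave the constant above Rossmanith's conjectured $\frac16$, since $\frac{23}{120}>\frac16$.
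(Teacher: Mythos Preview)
Your outline diverges from the paper in its opening move and never recovers the missing content. The paper does \emph{not} eliminate degree-$\le 3$ vertices with fill-in; it only contracts degree-$2$ vertices (no new edges), and degree-$3$ vertices may well be \emph{deleted} into $S$ (e.g.\ when the graph is $3$-regular). Your fill-in step creates the planarity problem you yourself flag---a separating triangle on $N(v)$ can genuinely block the lift---and you do not resolve it; ``strengthen the inductive hypothesis'' and ``branch on whether $H'$ keeps $N(v)$'' are not arguments, since you have no control over which vertices the recursive solution retains. This is a real gap, not a detail.

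More seriously, everything you label ``the core of the proof'' is absent: you promise local rules on degree-$4$ and degree-$5$ neighbourhoods and a discharging scheme, but give none. The paper's actual mechanism is different in kind. It runs a measure-and-conquer analysis: each vertex of degree $d$ carries a credit limit $c_d$ (with $1=c_2\ge c_3\ge c_4\ge 0$, $c_d=0$ for $d\ge 5$), and there is a global debt parameter $\tau$ that is incurred when degree-$3$ vertices first appear and must be repaid when they disappear. The reduction cases are not local-neighbourhood patterns but global degree profiles: contract a degree-$2$ vertex; stop if the component is $K_4$ or the two-vertex three-edge multigraph; delete a vertex if the graph is $3$-regular; delete a degree-$5$ vertex; delete a degree-$4$ vertex adjacent to a degree-$3$ vertex; delete any vertex if the graph is $4$-regular. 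Each case yields a linear inequality in $\epsilon,c_3,c_4,\tau$ (where the target is $n-m/(5+\epsilon)$), and solving the resulting LP gives $\epsilon=5/23$, i.e.\ $5+\epsilon=120/23$. Planarity and treewidth $\le 3$ then come for free, because every component of the output is a subdivision of $K_4$ or of the theta graph---there is no fill-in and hence no separating-triangle obstruction to worry about.

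So your guess that an LP produces the $120$ is right, but it is an LP over debt parameters in a degree-based potential function, not over discharging weights on unavoidable local configurations; and the reduction that feeds it is contraction-and-delete, not eliminate-with-fill-in.
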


\noindent These three theorems can be implemented as algorithms which take linear time to find the induced subgraphs described by the theorems.

\begin{theorem}\label{thm: lower_bound}
For every integer $h$, there is a family of graphs such that for any graph in this family with $n$ vertices and $m$ edges, the largest $K_h$-minor-free induced subgraph has at most $n-\frac{m}{6}+O(\frac{m}{\log m})$ vertices.
\end{theorem}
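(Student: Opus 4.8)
The plan is to exhibit, for each $h$, a family of $d$-regular graphs of large girth (with $d$ a suitable constant) and to argue that such graphs remain non-$K_h$-minor-free even after deleting nearly $m/6$ vertices. The starting point is a reduction from $K_h$-minor-freeness to low degeneracy. By a theorem of Kühn and Osthus (building on Dirac, Mader and Thomassen) there is a constant $g_0(h)$ such that every graph of minimum degree at least $3$ and girth at least $g_0(h)$ contains a $K_h$ minor. Hence if $G$ has girth at least $g_0(h)$ and $S \subseteq V(G)$ is any set with $G-S$ being $K_h$-minor-free, then $G-S$ (which still has girth at least $g_0(h)$) can contain no subgraph of minimum degree $3$; that is, $G-S$ is $2$-degenerate, i.e.\ its $3$-core is empty. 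Consequently, for such a $G$, every $K_h$-minor-free induced subgraph is $2$-degenerate, so it suffices to prove that the largest $2$-degenerate induced subgraph of $G$ has at most $n-m/6+O(m/\log m)$ vertices.

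For the construction I would take $G$ to be a $d$-regular graph of girth at least $g_0(h)$ --- an explicit Ramanujan graph (girth $\Omega(\log n)$, plus good spectral expansion), or equivalently a disjoint union of many pieces of size $\Theta(\log n)$, the latter form being convenient because it lets the $O(1)$ additive slack per piece accumulate to the claimed $O(m/\log m)$. Let $T$ be a largest $2$-degenerate induced subgraph, $t=|T|$. Then $e(G[T])\le 2t-3$; on the other hand $d$-regularity gives $e(G[T]) = \tfrac12\big(dt - e(T, V\setminus T)\big)$, and one bounds $e(T,V\setminus T)$ from above using the expansion of $G$. Combining these yields an upper bound $t \le \psi_d\, n + O(m/\log m)$ for an explicit constant $\psi_d$. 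Since $m=dn/2$, the target inequality $f(G)\le n - m/6 + O(m/\log m)$ follows as soon as $d$ is chosen so that $\psi_d \le 1 - d/12$; the optimal such $d$ is what forces the constant $1/6$. (For $h=4$ no construction of this kind is needed: a disjoint union of copies of $K_4$ already gives the \emph{exact} bound $n-m/6$, because any four vertices of a $K_4$ span a $K_4$, while $K_4$ itself has only $6$ edges.)

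The main obstacle is the quantitative step: proving a bound $\psi_d \le 1-d/12$ on the largest $2$-degenerate (equivalently, up to a small loss, the largest induced forest) induced subgraph of a $d$-regular graph of large girth, for some usable value of $d$. A bare edge count gives only $t \le \tfrac{dn}{2(d-2)}$, and a direct application of the expander mixing lemma gives roughly $t\le 4n/d$; neither of these is strong enough, since $4/d$ and $\tfrac{d}{2(d-2)}$ exceed $1-d/12$ for every $d$. What is required is a more careful analysis of the peeling (degeneracy-ordering) process on a locally tree-like graph --- tracking how the large girth and the expansion jointly force the deleted set to be large --- or the use of sharp known estimates for decycling numbers / induced forests in random regular graphs, combined with choosing $d$ so that all the constants line up at $1/6$. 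Handling the $O(m/\log m)$ error term similarly relies on the tree-like local structure (equivalently, on the small size of the pieces in the disjoint-union form of the construction).
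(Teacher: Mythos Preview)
Your high-level instinct---take a $d$-regular graph of girth $\Omega(\log n)$ and show that many vertices must be deleted before the $K_h$ minor disappears---is exactly what the paper does. The gap is in the reduction step. By invoking K\"uhn--Osthus only with a \emph{fixed} girth threshold $g_0(h)$, you reduce ``$K_h$-minor-free'' merely to ``$2$-degenerate'', i.e.\ to the edge bound $m^*\le 2n^*$. That throws away the fact that the girth is not just $\ge g_0(h)$ but actually $\Omega(\log n)$, and this loss is fatal. Concretely, for $d=3$ your target is outright false: in any connected $3$-regular graph, deleting a single vertex already kills the $3$-core (the peeling cascade exhausts the component, since a $3$-regular graph has no $3$-regular proper induced subgraph), so the largest $2$-degenerate induced subgraph has $n-1$ vertices, not $\le 3n/4$. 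For larger $d$ the target may be true, but as you yourself compute, neither the crude edge count nor the expander mixing lemma reaches $\psi_d\le 1-d/12$, and you do not supply the ``more careful peeling analysis'' that would.

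The paper's missing ingredient is to use the girth \emph{quantitatively}. Instead of the weak consequence ``$2$-degenerate, hence $m^*\le 2n^*$'', one shows that a $K_h$-minor-free graph of girth $g$ satisfies the much stronger bound
\[
m^*\;\le\; n^* + O_h\!\left(\frac{n^*}{g}\right).
\]
This is obtained by a short contraction argument: take a spanning tree, keep only every $\Theta(g)$-th level (so $n'\le O(n^*/g)$ vertices survive), and contract each remaining vertex to its nearest surviving ancestor. Because the girth is $\ge g$ this produces a \emph{simple} minor with $n'$ vertices and $m^*-n^*+n'$ edges; Kostochka--Thomason then bounds its edges by $O_h(n')$, yielding the display. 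With this in hand one can take $d=3$: a $3$-regular Ramanujan graph has $m=3n/2$ and girth $\Omega(\log n)$; deleting $k$ vertices removes at most $3k$ edges, so
\[
\tfrac{3n}{2}-3k \;\le\; m^* \;\le\; (n-k)\bigl(1+O(1/\log n)\bigr),
\]
which forces $k\ge n/4 - O(n/\log n)=m/6 - O(m/\log m)$. No expansion, no peeling analysis, and no optimization over $d$ is needed.
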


The bounds of \autoref{thm:pseudoforest} and \autoref{thm:sp} are tight, even for larger classes of induced subgraphs. In particular, there exist graphs for which the largest induced outerplanar subgraph has size at most $n-m/4.5$, so the bound of \autoref{thm:pseudoforest} is tight for any class of graphs between the pseudoforests and the outerplanar graphs. There also exist graphs for which the largest induced $K_4$-free induced subgraph has $n-m/5$ vertices, so \autoref{thm:sp} is tight for every family of graphs between the treewidth~2 graphs and the graphs with no $4$-clique.

\subsection{Related work}
The worst-case size of the largest induced planar subgraph has been studied previously by Edwards and Farr~\cite{EdwFar-GD-01}, who  proved a tight bound of $3n/(d+1)$ on its size as a function of the maximum degree~$d$ of the given graph. In contrast, by depending on the total number of edges rather than the maximum degree, our algorithms are sensitive to graphs with heterogeneous vertex degrees, and can construct larger induced subgraphs when the number of high-degree vertices is relatively small. Additionally, the algorithm given by Edwards and Farr is slower than ours, taking $O(mn)$ time. In a follow-up paper, Morgan and Farr~\cite{MorFar-JGAA-07} gave additional bounds on induced outerplanar subgraphs, and provided experimental results on the performance of their algorithms. A second paper by Edwards and Farr~\cite{EdwFar-DM-08}, like our \autoref{thm:sp}, gives bounds on the size of the largest induced partial 2-tree in terms of $n$ and $m$, which is $\frac{3n}{2m/n+1}$ for $m\ge 2n$. However, their bounds are asymptotically worse than \autoref{thm:sp} when $2n<m<2.5n$ and require an additional assumption of connectivity for smaller values of $m$. A third paper by the same authors~\cite{EdwFar-EJC-12} gives improved bounds that are more difficult to state as a formula.

For some other graph classes than the ones we study, it is possible to prove trivial bounds on the size of the largest induced subgraph in the class, of a similar form to the bounds of Alon et al. and of our theorems.
By repeatedly finding and removing a vertex of degree $\ge 1$, one can obtain an independent set of at least $n-m$ vertices, and the example of a perfect matching shows this to be tight. By repeatedly finding and removing a vertex of degree $\ge 2$, one can obtain a matching of at least $n-m/2$ vertices, and the example of the disjoint union of $n/3$ two-edge paths shows this to be tight. And by repeatedly finding and removing either a vertex of degree $\ge 3$ or a vertex that is part of a 2-regular cycle, one can obtain a linear forest (forest with maximum degree~2) of $n-m/3$ vertices; the example of the disjoint union of $n/3$ triangles shows this to be tight even for more general forests.

\autoref{tbl:known} provides a comparison of these new results with previous known results on induced planar subgraphs of various types and with the trivial bounds for independent sets, matchings, and linear forests.

\begin{table}[ph!]
\centering
\renewcommand{\arraystretch}{1.3}
\newcommand{\tbf}{\vrule height 15pt depth10pt width 0pt\displaystyle} 
\newcolumntype{A}{>{\centering\arraybackslash}X}
\begin{tabularx}{\textwidth}{|A|A|A|A|}
\hline
Size of induced subgraph & Constraints on $G$ & Type of subgraph & Reference\\\hline\hline
$\tbf n-m$& & independent set & trivial\\\hline
$\tbf n-\frac{m}{2}$& & matching & trivial\\\hline
$\tbf n-\frac{m}{3}$& & linear forest & trivial\\\hline
$\tbf n-\frac{m}{4}$& triangle-free & forest & \cite{alon2001}\\\hline
$\tbf 2\log \log n +O(\log \log \log n)$ & connected, $m = O(n)$ & tree & \cite{ESS86} \\\hline
$\tbf \frac{\log n}{4 \log r}$ & connected, $K_r$-free & tree & \cite{FLS09} \\\hline
$\tbf \sqrt{n}$ & connected, triangle-free & tree & \cite{FLS09} \\\hline
$\tbf \frac{\log n}{12\log \log n}$ & planar, 3-connected & path & \cite{GLM13} \\\hline
$\tbf n-\frac{m+c}{4}$& max degree $\le 3$ $c=\#$ connected components & forest & \cite{alon2001}\\\hline
$\tbf \frac{n}{\lceil (\Delta+1)/3 \rceil}$& max degree $\Delta$ & max degree 2 & \cite{HalLau-JGAA-97} \\\hline
$\tbf \frac{3n}{\Delta+5/3}$& max degree $\Delta$ & outerplanar & \cite{MorFar-JGAA-07} \\\hline
$\tbf \frac{3n}{\Delta+1}$& max degree $\Delta$ & planar & \cite{EdwFar-GD-01} \\\hline
$\tbf \frac{3n}{2m/n+1}$& $m\ge 2n$ or connected and $m\ge n$ & partial 2-tree & \cite{EdwFar-DM-08} \\ \hline
$\tbf \frac{5n}{6}$& claw-free subcubic& planar partial 4-tree & \cite{CheMcDSuz-WG-11} \\ \hline
$\tbf n-\frac{m}{4.5}$ & & pseudoforest & \autoref{thm:pseudoforest}\\\hline
$\tbf n-\frac{m}{5}$& & partial 2-tree & \autoref{thm:sp}\\\hline
$\tbf n-\frac{m}{5.2174}$& & planar partial 3-tree & \autoref{thm:planar}\\\hline
$\tbf \le n-\frac{m}{6}+o(m)$& & any minor-closed property & \autoref{thm: lower_bound} \\\hline
\end{tabularx}
\smallskip
\caption{Comparison of new and known results on induced subgraphs}
\label{tbl:known}
\end{table}

\section{Preliminaries}

For a graph $G$, we define $n(G)$ to be the number of vertices and
$m(G)$ to be the number of edges in~$G$. We drop the argument and write $n$ and $m$ when the choice of~$G$ is clear from context.

A subset $S$ of the vertices of $G$ corresponds to an \emph{induced subgraph} $G[S]$, a graph having $S$ as its vertices and having as edges every edge in $G$ that has both endpoints in $S$. Equivalently, $G[S]$ may be constructed from $G$ by deleting every vertex that is not in $S$ and every edge that has at least one endpoint outside $S$.

A \emph{pseudoforest} is an undirected graph in which every connected component has at most one cycle. Equivalently, the pseudoforests can be formed from forests (acyclic undirected graphs) by adding at most one edge per connected component. A \emph{$k$-tree} is an undirected graph that can be constructed from a $K_k$ graph by repeatedly picking a $K_k$ subgraph and attaching its $k$ vertices to a new vertex.  A \emph{partial $k$-tree} is a subgraph of a $k$-tree and is said to have {\em treewidth} at most $k$; the treewidth of a graph $G$ is denoted $\tw(G)$. Every pseudoforest is a partial 2-tree. A graph is a partial 2-tree if and only if every biconnected
component is a series parallel graph. The operations of adding a vertex with two adjacent neighbors and of taking subgraphs preserve planarity, so every partial 2-tree and every pseudoforest is a planar graph.

When constructing induced subgraphs of size $n-m/k$ for $k \in \mathbb{R}^+$, we will make the simplifying assumption that our graph $G$ has maximum degree at most $\lceil k-1 \rceil$.

\begin{observation}\label{obs:deg}
If every graph of maximum degree at most $\lceil k-1 \rceil$ contains an induced subgraph
with property $\cal P$ and at least $n-m/k$ vertices, then the same is true for every graph.
\end{observation}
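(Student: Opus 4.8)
The plan is to reduce the arbitrary-degree case to the bounded-degree hypothesis by a greedy peeling argument. Given a graph $G$ with $n$ vertices and $m$ edges, I would repeatedly pick any vertex whose degree exceeds $\lceil k-1\rceil$ --- equivalently, whose degree is at least $\lceil k-1\rceil+1$ --- and delete it together with its incident edges, continuing until no such vertex remains. This terminates after at most $n$ steps, and since deleting edges never raises any degree, the graph $G'$ that remains has maximum degree at most $\lceil k-1\rceil$. Because deleting a vertex from an induced subgraph produces another induced subgraph, and this relation is transitive (for $S'\subseteq S\subseteq V(G)$ we have $(G[S])[S']=G[S']$), the graph $G'$ is an induced subgraph of $G$.

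The quantitative heart of the argument is to track the potential $\Phi(H)=n(H)-m(H)/k$ through this process. Deleting a vertex of degree $d$ decreases $n$ by $1$ and $m$ by $d$, changing $\Phi$ by $d/k-1$. The vertices we delete all satisfy $d\ge\lceil k-1\rceil+1$, and $\lceil k-1\rceil+1\ge k$ for every real $k>0$ (with equality exactly when $k$ is an integer), so each deleted vertex has $d\ge k$ and hence $d/k-1\ge 0$. Thus every deletion leaves $\Phi$ unchanged or increases it, giving $n(G')-m(G')/k=\Phi(G')\ge\Phi(G)=n-m/k$.

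Finally I would invoke the hypothesis on $G'$: since $G'$ has maximum degree at most $\lceil k-1\rceil$, it contains an induced subgraph $H$ with property $\mathcal P$ on at least $n(G')-m(G')/k\ge n-m/k$ vertices; as $H$ is induced in $G'$ and $G'$ is induced in $G$, the subgraph $H$ is induced in $G$, has property $\mathcal P$, and has the claimed size. There is no real obstacle here --- the entire content is the inequality $\lceil k-1\rceil+1\ge k$, which makes deleting any vertex that is ``too high degree'' cost nothing against the target bound $n-m/k$. The only point to state carefully is that the argument uses nothing about $\mathcal P$ beyond its being a property of graphs, so that ``has property $\mathcal P$'' transfers verbatim from $H$ as a subgraph of $G'$ to $H$ as a subgraph of $G$; no hereditary or minor-closed assumption on $\mathcal P$ is invoked.
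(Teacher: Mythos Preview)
Your proof is correct and is essentially the same as the paper's: the paper phrases the argument as an induction on $n$ that removes a single vertex of degree $\ge k$ and applies the induction hypothesis, while you unroll this into an iterative peeling and track the potential $\Phi=n-m/k$, but the underlying inequality and structure are identical. Your explicit observation that $\lceil k-1\rceil+1\ge k$ is the key step in both versions.
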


\begin{proof}
We use induction on~$n$.
Let $G$ contain a vertex $v$ of degree $\ge k$, and let $G'$ be formed from $G$ by removing $v$. By the induction hypothesis,
$G'$ has an induced subgraph $H$ with the desired property $\cal P$ and at least $n(G')
- \frac{m(G')}{k}$ vertices.  Then $H$ is an induced $\cal P$-subgraph
of $G$ with size at least \[\hspace{4em}n(G)-1 - \frac{m(G')}{k} \ge n(G)-1 -
\frac{m(G)-k}{k} = n(G) - \frac{m(G)}{k}\hspace{4em} \]
\end{proof}

\section{Large induced pseudoforests}

In this section, we prove \autoref{thm:pseudoforest} by showing that we can delete at most $\frac{m}{4.5}$ vertices from a graph $G$ with $m$ edges to leave a pseudoforest; by \autoref{obs:deg}, we assume $G$ has degree at most 4.  

We repeatedly perform the first applicable reduction in the following list of cases, until no edges are left. As we do, we construct a set $S$ of vertices that will induce our desired subgraph.  Initially $S$ is empty and when no edges are left we add all remaining vertices to~$S$. The steps of the reduction essentially identify ``dangling trees'' and contract these.  We perform a series of vertex deletions (which identify vertices that will not belong to the final induced subgraph) and edge contractions.  If in doing so we create a component that consists of a single cycle (in fact a triangle; see case $\Delta$-a and {\em Vertex of degree 4} subcases (c), (i), and (ii)), we ``keep'' this component by adding the vertices of the cycle to $S$. This triangle is the minor of an induced cycle in our final induced subgraph which will only be incident to the ``dangling trees'' which had been contracted into the cycle.  This guarantees that the final induced subgraph has at most one cycle per component.  To bound the size of the output $S$ in terms of the number of edges, we use an amortized analysis, incurring a charge of $-4.5$ for every deleted vertex and a charge of $+1$ for every removed or contracted edge; we show that the net charge for every processing step is non-negative. Our cases are:

\newcommand{\ourcase}[1]{\smallskip\noindent{\bf #1.}~}

\ourcase{Leaf vertex}
If there is a vertex $a$ of degree 1, we add $a$ to $S$ and contract the edge incident to $a$.  This incurs a charge of $+1$.

\ourcase{Vertex of degree 2 not in a triangle}
If there is a vertex $a$ of degree 2
that is not part of a triangle, we add $a$ to $S$ and
contract an edge incident to $a$.  This incurs a charge of $+1$.

\ourcase{Vertex of degree 2 in a triangle}
If there is a vertex $a$
of degree 2 in a triangle $abc$ then we consider the four sub-cases
illustrated below:
\vspace{-.5cm}
\begin{enumerate}[($\Delta$--a)]
\item \begin{minipage}[t]{0.8\linewidth}
If the triangle is isolated: add $a$, $b$ and $c$ to $S$, and remove the edges of the triangle from the graph.  This incurs a charge of $+3$.
\end{minipage}
\begin{minipage}{0.1\linewidth}
 \raisebox{-6em} {\definecolor{xdxdff}{rgb}{0.49019607843137253,0.49019607843137253,1.0}
\definecolor{uuuuuu}{rgb}{0.26666666666666666,0.26666666666666666,0.26666666666666666}
\definecolor{qqqqff}{rgb}{0.0,0.0,1.0}
\begin{tikzpicture}[line cap=round,line join=round,>=triangle 45,x=1.0cm,y=1.0cm]
\draw (0.0,1.0)-- (0.0,-0.0);
\draw (1.0,1.0)-- (0.0,1.0);
\draw (1.0,1.0)-- (0.0,0.0);
\begin{scriptsize}
\draw [fill=qqqqff] (0.0,-0.0) circle (1.5pt);
\draw[color=qqqqff] (-0.15,0.15) node {$b$};
\draw [fill=qqqqff] (0.0,1.0) circle (1.5pt);
\draw[color=qqqqff] (-0.15,1.15) node {$a$};
\draw [fill=qqqqff] (1.0,1.0) circle (1.5pt);
\draw[color=qqqqff] (1.15,1.15) node {$c$};
\end{scriptsize}
\end{tikzpicture}}
\end{minipage}
\vspace{-0.8cm}
\item \begin{minipage}[t]{0.63\linewidth}
  If $b$ has degree 2 and $c$ has degree 3, then $c$ is adjacent to a
  vertex $d$ of degree at least 3 (otherwise $d$ would be a degree 2 vertex not belonging to a triangle). Delete $d$, isolating
  triangle $abc$, and then apply case ($\Delta$--a). This incurs a total charge
  of at least $+1.5$.
\end{minipage}
\begin{minipage}[c]{0.3\linewidth}
\raisebox{-9em}{\definecolor{qqqqff}{rgb}{0.0,0.0,1.0}
\definecolor{ttttff}{rgb}{0.2,0.2,1.0}
\begin{tikzpicture}[line cap=round,line join=round,>=triangle 45,x=1.0cm,y=1.0cm]
\draw (0.0,-1.0)-- (0.0,0.0);
\draw (0.0,0.0)-- (1.0,0.0);
\draw (1.0,0.0)-- (0.0,-1.0);
\draw (1.0,0.0)-- (2.0,0.0);
\draw (2.0,0.0)-- (3.0,0.5);
\draw (2.0,0.0)-- (3.0,0.0);
\draw [dash pattern=on 5pt off 5pt] (2.0,0.0)-- (3.0,-0.5);
\begin{scriptsize}
\draw [fill=qqqqff] (0.0,0.0) circle (1.5pt);
\draw[color=qqqqff] (-0.24,0.09999999999999888) node {$a$};
\draw [fill=qqqqff] (0.0,-1.0) circle (1.5pt);
\draw[color=qqqqff] (-0.24,-0.9400000000000014) node {$b$};
\draw [fill=qqqqff] (1.0,0.0) circle (1.5pt);
\draw[color=qqqqff] (1.1400000000000001,0.2099999999999989) node {$c$};
\draw [fill=qqqqff] (2.0,0.0) circle (1.5pt);
\draw[color=qqqqff] (2.14,0.2799999999999989) node {$d$};
\draw [fill=qqqqff] (3.0,0.0) circle (1.5pt);
\draw [fill=qqqqff] (3.0,0.5) circle (1.5pt);
\draw [fill=qqqqff] (3.0,-0.5) circle (1.5pt);
\end{scriptsize}
\end{tikzpicture}}
\end{minipage}
\item \begin{minipage}[t]{0.75\linewidth}
If $b$ has degree 3 and $c$ has degree at least 3, then delete $c$, add $a$ and $b$ to $S$ and contract the edges incident to $b$.  This incurs a charge of at least $+0.5$.
\end{minipage}
\begin{minipage}[c]{0.25\linewidth}
 \raisebox{-5.5em}{ \definecolor{qqqqff}{rgb}{0.0,0.0,1.0}
\definecolor{ttttff}{rgb}{0.2,0.2,1.0}
\begin{tikzpicture}[line cap=round,line join=round,>=triangle 45,x=1.0cm,y=1.0cm]
\draw (0.0,-0.0)-- (0.0,-1.0);
\draw (0.0,-0.0)-- (1.0,0.0);
\draw (1.0,0.0)-- (0.0,-0.5);
\draw (1.0,0.0)-- (2.0,0.0);
\draw (0.0,-0.5)-- (0.0,-1.0);
\draw [dashed] (1.0,0.0) -- ( 2.0,-0.5 );
\begin{scriptsize}
\draw [fill=ttttff] (0.0,-0.0) circle (1.5pt);
\draw[color=ttttff] (0.13999999999999996,0.2) node {$a$};
\draw [fill=ttttff] (0.0,-0.5) circle (1.5pt);
\draw[color=ttttff] (0.15,-0.6) node {$b$};
\draw [fill=ttttff] (1.0,0.0) circle (1.5pt);
\draw[color=ttttff] (0.9799999999999996,0.2) node {$c$};
\draw [fill=qqqqff] (2.0,0.0) circle (1.5pt);
\draw [fill=ttttff] (0.0,-1.0) circle (1.5pt);
\draw [fill=ttttff] (2.0,-0.5) circle (1.5pt);
\end{scriptsize}
\end{tikzpicture}}
\end{minipage}
\item \begin{minipage}[t]{0.75\linewidth}
   If $b$ has degree 4, then delete $b$, add $a$ to $S$ and contract $ac$.  This incurs a charge of $+0.5$. 
\end{minipage}
\begin{minipage}[c]{0.25\linewidth}
  \raisebox{-5em}{\definecolor{qqqqff}{rgb}{0.0,0.0,1.0}
\definecolor{ttttff}{rgb}{0.2,0.2,1.0}
\begin{tikzpicture}[line cap=round,line join=round,>=triangle 45,x=1.0cm,y=1.0cm]
\draw (0.0,-0.0)-- (0.0,-1.0);
\draw (0.0,-0.0)-- (1.0,0.0);
\draw (1.0,0.0)-- (0.0,-0.5);
\draw (1.0,0.0)-- (2.0,0.0);
\draw (0.0,-0.5)-- (0.0,-1.0);
\draw (1.0,0.0) -- ( 2.0,-0.5 );
\draw (0.0, -0.5) -- (0.5, -1.0);
\begin{scriptsize}
\draw [fill=ttttff] (0.0,-0.0) circle (1.5pt);
\draw[color=ttttff] (0.13999999999999996,0.2) node {$a$};
\draw [fill=ttttff] (0.0,-0.5) circle (1.5pt);
\draw[color=ttttff] (0.225,-0.53) node {$b$};
\draw [fill=ttttff] (1.0,0.0) circle (1.5pt);
\draw[color=ttttff] (0.9799999999999996,0.2) node {$c$};
\draw [fill=qqqqff] (2.0,0.0) circle (1.5pt);
\draw [fill=ttttff] (0.0,-1.0) circle (1.5pt);
\draw [fill=ttttff] (2.0,-0.5) circle (1.5pt);
\draw [fill=ttttff] (0.5, -1.0) circle (1.5pt);
\end{scriptsize}
\end{tikzpicture}}
\end{minipage}
\end{enumerate}

\ourcase{Vertex of degree 3 adjacent to a vertex of degree 4}  If there is a vertex $a$ of degree 3 adjacent to a vertex $b$ of degree 4, then we delete $b$.  Deleting $b$ incurs a net charge of $-0.5$, but reduces the degree of $a$ to 2.  Handling $a$ as above incurs a charge of at least $+0.5$, for a net charge of at least 0.

\ourcase{Vertex of degree 3} If this is the first applicable case, then the graph must be 3-regular. Deleting any vertex $a$ creates three vertices of degree~2 while incurring a charge of $-1.5$. Processing the three resulting degree-2 vertices as above incurs a charge of at least $+0.5$ per degree-2 vertex, for a net charge of at least 0.

\ourcase{Vertex of degree 4} If this is the first applicable case, then the graph must be 4-regular. We consider the following cases for the subgraph $N_a$ induced by the neighbors $b,c,d,e$ of a vertex~$a$.  
\begin{enumerate}[(a)]
\item If $N_a$  has two non-adjacent pairs $(b,c)$ and $(d,e)$, then $a$, $b$, and $c$ do not form a triangle, so we can delete $d$ and $e$ and contract $ab$. This removes nine edges from the graph and deletes two vertices, for a total charge of $0$.

\item If $N_a$ is a star graph with center $b$, then consider the
  neighbors of $c$: $a, b, f, g$. Neither $a$ nor $b$ can be adjacent
  to $f$ or $g$, because otherwise they would have too many neighbors.  Thus we can process vertex $c$ as in case (a) instead.

\item $N_a$ contains a triangle if and only if neither case (a) nor (b) applies: if $N_a$ contains at most 2 edges, then case (a) applies; if $N_a$ contains three edges but no triangle, then $N_a$ is a star (and case (b) applies) of $N_a$ is a path of length 3 (and case (a) applies); if $N_a$ contains four or more edges, then it must contain a triangle.

Without loss of generality the triangle is formed by vertices
  $bcd$, so $a$ is a vertex of a
  tetrahedron ($K_4$) induced by vertices $a,b,c,d$.  We may assume more strongly that every vertex in the graph belongs to a
  tetrahedron, for if not we may apply cases (a) or (b).  We form four sub-subcases:
  \begin{enumerate}[(i)]
  \item If any connected component is a complete graph  $K_5$, then deleting  two vertices leaves an isolated triangle and incurs a total charge of $+1$.
  \item If two tetrahedra, $a,b,c,d$ and $b,c,d,e$
    share triangle $b,c,d$ without forming a $K_5$, then $a$ and $e$ are non-adjacent.
    Deleting $a$ and $e$ removes eight edges but
    leaves the isolated triangle $b,c,d$ for a net charge of~$+2$ (see case~$\Delta$--a). We illustrate this case:
    \begin{center}
      \definecolor{qqqqff}{rgb}{0.0,0.0,1.0}
\begin{tikzpicture}[line cap=round,line join=round,>=triangle 45,x=1.0cm,y=1.0cm]
\draw (0.0,0.5)-- (-0.5,-0.0);
\draw (0.0,0.5)-- (1.0,0.0);
\draw (0.0,-0.5)-- (1.0,0.0);
\draw (0.0,-0.5)-- (-0.5,-0.0);
\draw (1.0,0.0)-- (2.0,0.0);
\draw (-0.5,-0.0)-- (-2.0,0.0);
\draw (-2.0,0.0)-- (0.0,0.5);
\draw (-2.0,0.0)-- (0.0,-0.5);
\draw (-2.0,0.0)-- (-3.0,0.0);
\draw (0.0,0.5)-- (0.0,-0.5);
\draw (-0.5,-0.0)-- (1.0,0.0);
\begin{scriptsize}
\draw [fill=qqqqff] (-0.5,-0.0) circle (1.5pt);
\draw [fill=qqqqff] (0.0,-0.5) circle (1.5pt);
\draw [fill=qqqqff] (-2.0,0.0) circle (1.5pt);
\draw[color=qqqqff] (-2,0.2) node {$a$};
\draw [fill=qqqqff] (1.0,0.0) circle (1.5pt);
\draw[color=qqqqff] (1.,0.2) node {$e$};
\draw [fill=qqqqff] (0.0,0.5) circle (1.5pt);
\draw [fill=qqqqff] (2.0,0.0) circle (1.5pt);
\draw [fill=qqqqff] (-3.0,0.0) circle (1.5pt);
\draw[color=qqqqff] (-0.5,0.2) node {$c$};
\draw[color=qqqqff] (0,0.68) node {$b$};
\draw[color=qqqqff] (0.1,-0.63) node {$d$};
\end{scriptsize}
\end{tikzpicture}
    \end{center}
      \item In the remaining cases all tetrahedra must be vertex-disjoint, for if two tetrahedra share only an edge or a vertex, there would be a vertex with degree $>4$. If  two tetrahedra $abcd$ and $efgh$ are connected to each other by at least two edges ($be$ and $dg$), then we delete the two non-adjacent vertices
        $d$ and $e$ as illustrated here:

\hspace{2em}
\begin{minipage}{0.35\linewidth}
\definecolor{qqqqff}{rgb}{0.0,0.0,1.0}
\begin{tikzpicture}[line cap=round,line join=round,>=triangle 45,x=1.0cm,y=1.0cm]
\draw (-1.5,-0.5)-- (-1.5,0.5);
\draw (-1.5,0.5)-- (-0.5,0.5);
\draw (-0.5,0.5)-- (0.5,0.5);
\draw (0.5,0.5)-- (1.5,0.5);
\draw (1.5,0.5)-- (1.5,-0.5);
\draw (1.5,-0.5)-- (0.5,-0.5);
\draw (0.5,-0.5)-- (-0.5,-0.5);
\draw (-0.5,-0.5)-- (-1.5,-0.5);
\draw (-0.5,0.5)-- (-0.5,-0.5);
\draw (0.5,0.5)-- (0.5,-0.5);
\draw (-0.5,0.5)-- (-1.5,-0.5);
\draw (-1.5,0.5)-- (-0.5,-0.5);
\draw (1.5,0.5)-- (0.5,-0.5);
\draw (0.5,0.5)-- (1.5,-0.5);
\draw[dashed] (-1.5,0.5) .. controls (0,1) .. (1.5, 0.5);
\draw[dashed] (-1.5,-0.5) .. controls (0,-1) .. (1.5, -0.5);
\draw[dotted] (-1.5,0.5) -- (-2,0.75);
\draw[dotted] (-1.5,-0.5) -- (-2,-0.75);
\draw[dotted] (1.5,0.5) -- (2,0.75);
\draw[dotted] (1.5,-0.5) -- (2,-0.75);
\begin{scriptsize}
\draw [fill=qqqqff] (-0.5,0.5) circle (1.5pt);
\draw[color=qqqqff] (-0.5,0.7) node {$b$};
\draw [fill=qqqqff] (-0.5,-0.5) circle (1.5pt);
\draw[color=qqqqff] (-0.5,-0.7) node {$d$};
\draw [fill=qqqqff] (-1.5,0.5) circle (1.5pt);
\draw[color=qqqqff] (-1.5,0.7) node {$a$};
\draw [fill=qqqqff] (-1.5,-0.5) circle (1.5pt);
\draw[color=qqqqff] (-1.5,-0.7) node {$c$};
\draw [fill=qqqqff] (0.5,0.5) circle (1.5pt);
\draw[color=qqqqff] (0.5,0.7) node {$e$};
\draw [fill=qqqqff] (0.5,-0.5) circle (1.5pt);
\draw[color=qqqqff] (0.5,-0.7) node {$g$};
\draw [fill=qqqqff] (1.5,0.5) circle (1.5pt);
\draw[color=qqqqff] (1.5,0.7) node {$f$};
\draw [fill=qqqqff] (1.5,-0.5) circle (1.5pt);
\draw[color=qqqqff] (1.5,-0.7) node {$h$};
\end{scriptsize}
\end{tikzpicture}
\end{minipage}
\begin{minipage}{0.6\linewidth}
  \begin{quote} {\small The dashed edges are possible connections from
      vertices $a,c,f,h$.}
  \end{quote}
\end{minipage}

        This incurs a charge of $-1$ but leaves two
        non-adjacent degree two vertices ($b$ and $g$) each of which can be processed via case
        ($\Delta$--c), adding charge $+0.5$ per vertex for a total charge of at least 0.
  \item If every pair of tetrahedra are connected by at most one
    edge, then contracting every tetrahedron to a vertex reduces the input graph to a smaller
    4-regular simple graph that necessarily contains a cycle of three or more edges. In the uncontracted graph, this gives a cycle of six or more edges that alternates between edges within tetrahedra and edges outside the tetrahedra:
   \begin{center}
      \definecolor{qqqqff}{rgb}{0.0,0.0,1.0}
\begin{tikzpicture}[line cap=round,line join=round,>=triangle 45,x=1.0cm,y=1.0cm]
\draw (-2.0,0.0)-- (-2.0,1.0);
\draw (-2.0,1.0)-- (-1.0,1.0);
\draw (-1.0,1.0)-- (-1.0,0.0);
\draw (-1.0,0.0)-- (-2.0,0.0);
\draw (-2.0,0.0)-- (-1.0,1.0);
\draw (-2.0,1.0)-- (-1.0,0.0);
\draw (-1.0,0.0)-- (0.0,0.0);
\draw (0.0,1.0)-- (0.0,0.0);
\draw (0.0,1.0)-- (1.0,1.0);
\draw (1.0,1.0)-- (1.0,0.0);
\draw (1.0,0.0)-- (0.0,0.0);
\draw (0.0,0.0)-- (1.0,1.0);
\draw (0.0,1.0)-- (1.0,0.0);
\draw [dashed](3,0) .. controls (0.5,-0.75) .. (-2, 0);
\draw (-2,1) -- (-2,1.5);
\draw (-1,1) -- (-1,1.5);
\draw (0,1) -- (0,1.5);
\draw (1,1) -- (1,1.5);
\draw (1,0) -- (2.0,0);
\draw (2,0) -- (2.0,1);
\draw (2.0,1) -- (3.0,1);
\draw (3.0,1) -- (3.0,0);
\draw (3.0,0) -- (2.0,0);
\draw (2,1) -- (3.0,0);
\draw (3,1) -- (2.0,0);
\draw (2,1) -- (2.0,1.5);
\draw (3,1) -- (3.0,1.5);
\begin{scriptsize}
\draw [fill=qqqqff] (-1.0,0.0) circle (1.5pt);
\draw [fill=qqqqff] (-1.0,1.0) circle (1.5pt);
\draw [fill=qqqqff] (-2.0,1.0) circle (1.5pt);
\draw [fill=qqqqff] (0.0,1.0) circle (1.5pt);
\draw [fill=qqqqff] (1.0,1.0) circle (1.5pt);
\draw [fill=qqqqff] (0.0,0.0) circle (1.5pt);
\draw [fill=qqqqff] (1.0,0.0) circle (1.5pt);
\draw [fill=qqqqff] (-2.0,0) circle (1.5pt);
\draw [fill=qqqqff] (-2,1.5) circle (1.5pt);
\draw [fill=qqqqff] (-1,1.5) circle (1.5pt);
\draw [fill=qqqqff] (0,1.5) circle (1.5pt);
\draw [fill=qqqqff] (1,1.5) circle (1.5pt);
\draw [fill=qqqqff] (2.0,0) circle (1.5pt);
\draw [fill=qqqqff] (2.0,1) circle (1.5pt);
\draw [fill=qqqqff] (3.0,1) circle (1.5pt);
\draw [fill=qqqqff] (3.0,0) circle (1.5pt);
\draw [fill=qqqqff] (2.0,1.5) circle (1.5pt);
\draw [fill=qqqqff] (3.0,1.5) circle (1.5pt);

\end{scriptsize}
\end{tikzpicture}
    \end{center}
 In this case, we choose one of the tetrahedra of the cycle, and delete the two vertices of this tetrahedron that do not belong to the cycle. This removes seven edges from the graph for a net    charge of $-2$, but leaves two degree-2 vertices on a cycle of length at least~6. Each of these may be processed as a degree-2 vertex that does not belong to a triangle, giving a charge of $+1$ each and making the net charge be~0.

 \end{enumerate}
\end{enumerate}

This case analysis concludes the proof of \autoref{thm:pseudoforest}.
The proof also gives the outline for an efficient algorithm for finding an induced pseudoforest of size at least $n-m/4.5$: after removing any high-degree vertices, form a data structure that lists the configurations of the graph obeying each of the cases in the analysis. Because the remaining graph has bounded degree, selecting the first applicable case, performing the reduction steps of the case, and updating the list of configurations for each case can all be done in constant time per case, leading to a linear overall time bound.

\autoref{thm:pseudoforest} is tight: there exist arbitrarily large graphs in which the largest induced pseudoforest has exactly $n-m/4.5$ vertices. In particular, let $G$ be a graph formed by the disjoint union of $n/6$ copies of the complete bipartite graph $K_{3,3}$.
Then, to form a pseudoforest in $G$, we must delete at least two vertices from each copy of $K_{3,3}$, for deleting only one vertex leaves $K_{2,3}$ which is not a pseudoforest.
Each copy has nine edges, so the number of deleted vertices must be at least $m/4.5$. The same class of examples shows that even if we are searching for the broader class of induced outerplanar subgraphs, we may need to delete $m/4.5$ vertices.

\section{Large induced treewidth two graphs}

In this section, we prove \autoref{thm:sp} by showing that we can delete at most $\frac{m}{5}$ vertices from a graph $G$ with $m$ edges to leave a graph with treewidth at most~2. By \autoref{obs:deg}, we may assume without loss of generality $G$ has degree at most 4.  We prove the theorem algorithmically by arguing that the following procedure builds a vertex set $S$ of size at least $n-\frac{m}{5}$ such that $G[S]$ has treewidth at most~2.  The procedure modifies the graph by edge contractions but does not increase its degree over~4.
\newpage
\begin{tabbing}
  $S = \emptyset$\\
  make $G$ simple by removing self-loops and parallel edges \\
  while $G$ has more than 1 vertex:\\
  \qquad\= if there is a vertex $v$ of degree one or two: \\
  \> \qquad \= contract an edge incident to $v$ and add $v$ to $S$: \\
  \> \> make $G$ simple by removing self-loops and parallel edges \\
  \> else if $G$ contains a vertex of degree three:\\
  \> \> delete a vertex of the largest degree adjacent to a degree-three vertex  \\
  \> else: \\
  \> \> delete a vertex of maximum degree \\
  add the isolated vertices to $S$
\end{tabbing}

\begin{lemma}\label{lem: algorithm induces SP}
The induced subgraph $G[S]$ produced by the algorithm above has treewidth at most~2.
\end{lemma}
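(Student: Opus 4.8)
The plan is to use the standard characterization that a graph has treewidth at most~$2$ if and only if it can be reduced to an edgeless graph by a sequence of the operations: delete a vertex of degree at most~$1$; contract an edge incident to a vertex of degree exactly~$2$; and delete a self-loop or a parallel edge (this is the usual series--parallel reduction). It therefore suffices to exhibit such a reduction of $G[S]$ to an edgeless graph, and the natural approach is to replay the steps of the algorithm, which itself only deletes vertices, contracts edges at low-degree vertices, and simplifies. I will assume $G$ is simple, which is harmless since the algorithm simplifies it at the start and treewidth depends only on the underlying simple graph.

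First I would fix notation: let $v_1,\dots,v_t$ be the vertices added to $S$ inside the while loop, in the order in which the corresponding contractions occur, and let $z$ be the vertex added to $S$ afterwards (if the loop ran at all, exactly one vertex remains). Then $S=\{v_1,\dots,v_t,z\}$ and $D:=V(G)\setminus S$ is exactly the set of vertices removed by deletion steps. For $0\le i\le t$, let $G_i$ be the algorithm's working graph at the instant just before $v_{i+1}$ is processed (and $G_t$ the single-vertex graph left after the loop), with each vertex regarded as a representative original vertex; so $G_i$ arises from $G_{i-1}$ by the $i$-th contraction, then a simplification, then zero or more deletions, each of a vertex in $D$. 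The key step is to prove the invariant that, after performing the first $i$ operations of a suitable reduction list on $G[S]$, the resulting graph $R_i$ equals the subgraph of $G_i$ induced by $V(G_i)\cap S$. The base case $R_0=G[S]$ holds because deleting vertices of $D$ does not change the subgraph induced on $S$ (and $G$ is simple).

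For the inductive step I would let the $i$-th operation imitate the algorithm's $i$-th contraction, which contracts the edge between $v_i$ and some neighbor $x_i$. If $x_i\in S$, I contract the same edge in $R_{i-1}$, which is legal since $v_i$ has degree at most~$2$ in $G_{i-1}$ and hence at most~$2$ in its induced subgraph $R_{i-1}$. If $x_i\in D$, then one of the at most two neighbors of $v_i$ in $G_{i-1}$ lies in $D$, so the induction hypothesis gives that $v_i$ has degree at most~$1$ in $R_{i-1}$, and I delete $v_i$ instead; this too is legal. A short check --- using again that deletions of $D$-vertices are invisible on the $S$-part, and that contracting an edge commutes with restricting to the induced subgraph on $S$ after simplifying parallel edges --- then shows that the result is the subgraph of $G_i$ induced by $V(G_i)\cap S$, so the invariant propagates. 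I expect this bookkeeping to be the main obstacle: the algorithm's working graph carries the ``doomed'' vertices of $D$, so its sequence of graphs is not literally a chain of minors of $G[S]$, and one must check that projecting away $D$ turns each contraction into a legal reduction step, with the one subtlety that a contraction directed at a doomed vertex degenerates into deleting a vertex of degree at most~$1$. Once the invariant holds, $R_t$ is the subgraph of the one-vertex graph $G_t$ induced by $\{z\}$, which has no edges; hence $G[S]=R_0$ reduces to an edgeless graph, and so $\tw(G[S])\le 2$.
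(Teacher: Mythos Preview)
Your argument is correct and follows essentially the same idea as the paper's proof: both replay the algorithm's sequence of contractions and simplifications and argue that this induces a valid series--parallel reduction of $G[S]$. Your version is somewhat more explicit---in particular, you spell out the case where the contracted edge goes from $v_i$ to a deleted vertex $x_i\in D$ (handled by deleting the degree-$\le 1$ vertex $v_i$ instead), whereas the paper absorbs this case implicitly by working only with the edges of $G[S]$ and noting that such edges are not among them.
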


\begin{proof}
  Mark the edges that are contracted and the edges that are removed as self-loops and parallel edges.  Then all the edges of $G[S]$ are colored.  (Edges not in $G[S]$ may also be colored.)  

  Recall that a graph is series parallel if can be reduced to a set of isolated vertices by contracting edges incident to vertices of degree 2 and deleting parallel edges and loops.  Consider the order $e_1, e_2, \ldots, $ of the edges of $G[S]$ by the order in which they are contracted or removed from $G$.  Since $G[S]$ only consists of colored edges, for every $i$, $e_i$ must be a parallel edge, a self-loop or incident to a vertex of degree 2 in $G[\cup_{j \ge i} e_j]$ and so has the same property in $G[S]$ which is a subgraph.  Therefore, $G[S]$ is series parallel, and so, equivalently, has treewidth 2. 
\end{proof}





\begin{lemma}\label{lem: s in algorithm less than m/5}
$|S| \ge n-\frac{m}{5}$.
\end{lemma}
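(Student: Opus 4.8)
The plan is to run the same amortized (discharging) analysis that was used for Theorem~\ref{thm:pseudoforest}, but now with a charge of $-5$ per deleted vertex and $+1$ per contracted or removed edge, and to verify that every iteration of the algorithm has non-negative net charge. Since the algorithm terminates with every edge of $G$ either contracted, removed as a parallel edge or self-loop, or absent because it had an endpoint deleted, the total number of ``$+1$'' charges is at most $m$ and the total number of ``$-5$'' charges is $5d$ where $d$ is the number of deleted vertices; non-negativity of the total then gives $d \le m/5$, and since $G[S]$ contains all $n-d$ non-deleted vertices, $|S| \ge n - m/5$, which is the claim. By \autoref{obs:deg} we may assume $\Delta(G)\le 4$ throughout, and the algorithm's contraction steps are explicitly stated not to raise the degree above~$4$.

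The verification splits according to which branch of the algorithm fires. \textbf{Degree one or two:} contracting one incident edge and adding $v$ to $S$ removes at least one edge (more if self-loops/parallel edges are then cleaned up), so the net charge is $+1 - (\text{at most }1)\cdot\ldots$; more carefully, one contraction plus the subsequent simplification removes $\ge 1$ edges while deleting $0$ vertices, for net charge $\ge +1 \ge 0$ — wait, I must be careful: contracting an edge at a degree-two vertex can, in the worst case, remove exactly one edge, giving net $+1$; this is fine. \textbf{Degree three present:} here the algorithm deletes a neighbor $w$ of maximum degree adjacent to some degree-three vertex $v$; deleting $w$ costs $-5$ but removes $\deg(w)$ edges. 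If $\deg(w)=4$ we recoup $+4$, net $-1$, and we must argue the freed-up structure (in particular $v$, whose degree drops to~$2$) generates the missing charge in the very next steps, exactly as in the degree-3/degree-4 case of the pseudoforest proof. If $\deg(w)=3$ (i.e.\ $G$ is $3$-regular on the relevant component), deleting $w$ removes $3$ edges for net $-2$ but creates three degree-two vertices; each subsequent degree-two contraction must be shown to yield enough charge. \textbf{No vertex of degree $\le 3$:} then $G$ is $4$-regular and we delete a maximum-degree (hence degree-$4$) vertex for net $-1$, creating four degree-two vertices to be processed. The crux is therefore a local ``lookahead'' lemma: after a bad deletion that drops some vertices to degree~$2$, the ensuing forced sequence of degree-$\le 2$ contractions (and any induced further drops) collectively pays back the deficit. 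This is the part that mirrors, but is strictly easier than, the intricate $K_4$/tetrahedron bookkeeping in the pseudoforest proof, because here we never need to preserve a cycle: every degree-$\le 2$ vertex is simply contracted, so each such step is worth a clean $+1$ and we only need to count how many such steps are guaranteed.

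The main obstacle is pinning down, in the $4$-regular (and $3$-regular) cases, a worst-case lower bound on how many edges get removed before the charge is restored — equivalently, showing that deleting one degree-$4$ vertex and then greedily contracting the resulting degree-$\le 2$ vertices cannot ``stall'' after removing only a few edges. The clean way to handle this is to amortize over a whole \emph{phase}: charge the deletion of $v$ together with the contractions of the vertices whose degree it lowered. A degree-$4$ deletion removes $4$ edges immediately and lowers four vertices to degree~$\le 3$; tracking these, I expect to show each phase removes at least $5$ edges per deleted vertex by a short case check on how the neighbors of $v$ are interconnected (the relevant subcases are: the neighborhood has two non-adjacent pairs; it is a star; it contains a triangle — paralleling cases (a)–(c) for degree~4 in the pseudoforest section, but without the $K_5$/shared-tetrahedron subtleties). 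I would also need the small observation that the initial ``make $G$ simple'' step and the repeated simplifications only ever \emph{remove} edges, so they can only help the amortized count; and that when the algorithm deletes a degree-$3$ vertex's maximum-degree neighbor, that neighbor indeed has degree $\ge 3$, so the $-5$ is offset by at least $+3$. Assembling these per-phase inequalities and checking that the global sum is non-negative completes the proof.
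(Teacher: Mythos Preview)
Your overall framework matches the paper: charge $-5$ per deleted vertex, $+1$ per removed or contracted edge, and show the total is non-negative. The degree-$\le 2$ case and the ``degree-4 adjacent to a degree-3 vertex'' case are handled essentially as in the paper.

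The 4-regular case, however, rests on a factual error that breaks the argument. When you delete a degree-4 vertex from a 4-regular graph, its four neighbors drop from degree~4 to degree~\emph{3}, not to degree~2. No contractions are triggered; the very next step of the algorithm is another \emph{deletion} (of a degree-4 vertex adjacent to one of these new degree-3 vertices), not a contraction. Your proposed ``lookahead lemma'' about a forced sequence of degree-$\le 2$ contractions paying back the $-1$ deficit therefore has nothing to work with. The neighborhood-structure case split you suggest (two non-adjacent pairs / star / triangle) is imported from the pseudoforest proof, where vertices were being dropped to degree~2 and contracted; here the post-deletion degrees are all~3, so that analysis does not apply.

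The paper's actual argument for the 4-regular case is more global. After the initial 4-regular deletion (charge $-1$), the graph has a mix of degree-3 and degree-4 vertices, and each subsequent ``delete a degree-4 neighbor of a degree-3 vertex'' step has charge~$\ge 0$. One then shows that before the graph can return to 4-regular (the only way to incur another $-1$), some step must generate strictly positive charge: either the last degree-4 vertex removed had only degree-3 neighbors (so four contractions follow, charge $+3$), or all degree-4 vertices disappear and the graph becomes 3-regular, in which case the next deletion has charge $+1$. The $-1$ is repaid not by immediate contractions but by an eventual surplus step inside the run of mixed-degree deletions; this is the missing idea in your proposal.
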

\begin{proof}
  We show, equivalently, that the procedure deletes at most
  $\frac{m}{5}$ vertices by amortized analysis.  For each vertex that
  we delete we incur a charge of $-5$.  For each edge that we contract
  (incident to a degree-1 or degree-2 vertex), remove (as a self-loop or
  parallel edge) or delete (by way of deleting an adjacent vertex) we
  incur a charge of $+1$.  Note that we distinguish between deleting
  and removing an edge for the purpose of this analysis.  We will show
  that the net charge is positive, thus showing that for every 5 edges
  of the graph, we delete at most one vertex.

 The first case of the algorithm, in which an edge is contracted, incurs only a positive charge. There are three remaining  cases in which a vertex is deleted, according to the degree of the deleted vertex
  and whether it is adjacent to a degree 3 vertex.  

  \ourcase{Deleting a degree-3 vertex from a 3-regular graph}
  Deleting such a vertex incurs a charge of $+3-5 = -2$ but creates
  three degree-2 vertices. At least one edge incident to each of these
  will be contracted (or removed) for a total charge of $+3$. Therefore, before another
  vertex is deleted, the net charge for deleting this vertex is at
  least $+1$.

  \ourcase{Deleting a degree-4 vertex adjacent to a degree-3 vertex} 
  Deleting such a vertex $v$ incurs a charge of $+4-5 = -1$ but
  creates at least one vertex $u$ of degree 2; an edge incident to $u$
  will be contracted before another vertex is deleted.  The net charge
  for deleting $v$ is at least 0.

  \ourcase{Deleting a degree-4 vertex from a 4-regular graph} Deleting such a vertex incurs a charge of $+4-5 = -1$. After this case happens, the remaining graph will have at least four degree-3 vertices, and will continue to have a mix of degree-3 and degree-4 vertices until either the graph becomes 4-regular again or all degree-4 vertices have been removed.  Until this happens, the operations that occur are the deletion of degree-4 vertices incident to degree-3 vertices (incurring non-negative charge, as described in the previous case) and the contraction of edges.  We show that there will be an operation that occurs a positive charge before the procedure is complete or before another negative charge can be incurred (if the graph returns to being 4-regular), balancing the negative charge from the deletion of the degree-4 vertex from the 4-regular graph.
There are 3 cases:
  \begin{enumerate}[(a)]
  \item If the graph becomes 4-regular, it
    can only be after removing a degree-4 vertex adjacent to four
    degree-3 vertices, followed by four contractions of the resulting
    degree-2 vertices. These contractions ($+4$ charge) and deletion
    ($-5$ for deleting the degree-4 vertex $+4$ for the deletion of
    the incident edges) give a charge of $+3$.
  \item If all the degree-4 vertices are deleted without making the graph 3-regular, then the last degree-4 vertex deleted must not have had any degree-4 neighbors. The deletion of the last degree-4 vertex causes its neighbors to have degree~2 and is followed by four contractions, for a charge of $+3$.
  \item If the graph becomes 3-regular, then the deletion of the first
    degree-3 vertex from this 3-regular state incurs a charge of $+1$
    (as argued in the first case).
  \end{enumerate}
  Thus, in all cases, the negative charge for the removal of a degree-4 vertex from a 4-regular graph is balanced by a positive charge for a subsequent step of the algorithm.  
\end{proof}

The bound $n-m/5$ is tight, for arbitrarily large graphs. In particular, the graphs formed from $n/5$ disjoint copies of the complete graph $K_5$ can have at most $n-m/5$ vertices in any induced subgraph of treewidth at most~2, for otherwise one of the copies of $K_5$ would have only one of its vertices removed in the subgraph, leaving a $K_4$ subgraph which does not have treewidth~2.

\section{Large induced planar graphs}

In this section, we prove \autoref{thm:planar}. In outline, the proof (and corresponding algorithm) is similar to that of \autoref{thm:sp}, but with more cases and a more complex system of charges inspired by the ``measure and conquer'' method for analyzing exponential-time backtracking algorithms~\cite{FomGraKra-ICALP-05}. To handle the complexity of the analysis, we will set up a linear program
whose variables are certain parameters of the analysis, the foremost of these being a number $\epsilon$ such that the algorithm finds a subgraph of at least $n-m/(5+\epsilon)$ vertices.
These parameters are used only in the analysis of the algorithm; the algorithm itself does not need to know them, but instead (as with our other algorithms) merely chooses the first available reduction case. 

Similarly to our previous charging schemes for analyzing our algorithms,
we will charge $-(5+\epsilon)$ for every removed vertex and $+1$ for every removed edge. However,  instead of showing that some steps of the algorithm with a negative charge are balanced by later steps with a positive charge, we require that each step must immediately have zero total charge. In order to achieve this, we augment our analysis by a system of  ``debts'' that will have to be paid later. At any step, we can place a debt on a vertex or on the whole graph. This causes a net positive charge for that step, in the amount of the new debt, but a later step that removes this vertex from the graph must incur a negative charge equal to the debt of the vertex in addition to its other charges.

We give a vertex of degree $d$ a credit limit of $c_d$ units; this defines the maximum debt that can be placed on it. We will set
\begin{equation}
1=c_2\ge c_3\ge c_4\ge 0\label{eq:1}
\end{equation}
and set $c_d=0$ for all $d>4$. We define $\delta_i$ (for $i\in\{2,3,4\}$) to equal $c_i-c_{i+1}$, and we will constrain these credit limits to ensure that
\begin{equation}
\delta_2\ge\delta_3\ge\delta_4.
\end{equation}
When a step causes the degree of a vertex $v$ to decrease from $d+1$ to $d$, we may place $\delta_d$ units of additional debt on that vertex, but when we delete or remove a vertex of degree $d$ we may have to pay $c_d$ units to clear its debt. In addition, we use one more parameter, $\tau$, giving the credit limit for the whole graph. The graph may incur a debt of $\tau$ units in a step in which the number of degree-three vertices becomes nonzero, but must clear this debt (subtracting $\tau$ from the charge for the step) at any step that removes all remaining degree-three vertices.

As in our previous algorithms we may assume without loss of generality that all vertices have degree at most five, by removing higher-degree vertices until no more such vertices remain. We then have the following cases:

\ourcase{Degree two} If there is a degree two vertex, we contract an edge and add the vertex to~$S$. As we have set $c_2 = 1$, the charge for clearing any debt on the vertex is canceled by the charge for the contracted edge, giving non-negative total charge.

\ourcase{Planar} If the remaining graph is either $K_4$ or a two-vertex multigraph with three edges (formed from $K_{3,3}$ by deleting a vertex and contracting degree-two paths), we add all its vertices to $S$. If there are $k$ vertices, the total debt is at most $kc_3+\tau$ and the total charge for the edges in the remaining graph is $3k/2$. The worst case happens for $k=2$, giving a total charge that is non-negative as long as:
\begin{equation}
2c_3+\tau\le 3\label{eq:2}
\end{equation}

\ourcase{3-regular} If the graph is 3-regular we delete a vertex. Because the previous (planar) case did not apply, it must have had six or more vertices prior to the deletion, so the deletion cannot eliminate all degree-three vertices. The deletion gives us a charge
of $-(5+\epsilon)-c_3$, the three deleted edges give a charge of $+3$, and we can place $\delta_2=1-c_3$ more units of debt on each of the three neighbors, giving us a net charge of $1-\epsilon-4c_3$. The result is non-negative as long as:
\begin{equation}
4c_3+\epsilon\le 1\label{eq:3}
\end{equation}

\ourcase{Degree 5} If there exists a degree-5 vertex we delete it, and increase the debt on its neighbors by $c_4=\delta_4$ each. This incurs a net charge of $-\epsilon+5c_4$. The result is non-negative as long as: 
\begin{equation}
\epsilon-5c_4\le 0\label{eq:4}
\end{equation}

\ourcase{Mixed degrees} If there exists a degree-4 vertex $v$ adjacent to a degree-3 vertex, we delete $v$. 
There are two cases:
\begin{enumerate}[(a)]
\item After the deletion, the graph still has degree-3 vertices remaining.
The charge for deleting the vertex and its four edges is $-1-\epsilon$, it may already carry a debt of $c_4$, and the amount of additional debt that can be placed on its neighbors is at least $\delta_2+3\delta_3$. The result is non-negative as long as:
\begin{equation}
\epsilon - 2c_3 + 4c_4\le 0\label{eq:5}
\end{equation}

\item The deletion eliminates all remaining degree-3 vertices. This can only happen if all neighbors of the deleted vertex had degree three. The charge for deleting the vertex and its four edges is $-1-\epsilon$, and it may already carry a debt of $c_4$. An additional charge of $4\delta_2$ may be placed on its four neighbors, but we must also pay a charge of $-\tau$ for the elimination of all degree-three vertices. Thus, the total charge is
$-1-\epsilon-c_4+4(1-c_3)-\tau=3-\epsilon-4c_3-c_4-\tau$. The result is non-negative as long as:
\begin{equation}
\epsilon+4c_3+c_4+\tau\le 3\label{eq:6}
\end{equation}

\end{enumerate}

\ourcase{4-regular} If the graph is 4-regular, we delete any vertex. The charge for deleting the vertex and its four edges is $-1-\epsilon$, it may carry a debt of $c_4$, and additionally we place $\delta_3$ units of debt at each neighbor and $\tau$ units of debt on the whole graph. Thus, the total charge is $-\epsilon-1+\tau+4\delta_3-c_4$. The result will be non-negative as long as:
\begin{equation}
 \epsilon-\tau-4c_3+5c_4\le -1\label{eq:7}
\end{equation}
\begin{table}[t]
{\small\begin{verbatim}epsilon;
2 c3 - c4 <= 1; /* delta_2 >= delta_3 */
c3 - 2 c4 >= 0; /* delta_3 >= delta_4 */
c4 >= 0;        /* delta_4 >= 0 */
tau >= 0;
2 c3 + tau <= 3;                    /* case: planar */
4 c3 + epsilon <= 1;                /* case: 3-regular */
epsilon - 5 c4 <= 0;                /* case: degree 5 */
epsilon - 2 c3 + 4 c4 <= 0;         /* case: mixed degrees (a) */
epsilon + 4 c3 + c4 + tau <= 3;     /* case: mixed degrees (b) */
epsilon - tau - 4 c3 + 5 c4 <= -1;  /* case: 4-regular */\end{verbatim}}
\caption{Formulation as a linear program of the problem of maximizing $\epsilon$ given the linear inequalities on $\epsilon$, $c_3$, $c_4$, and $\tau$ required by our case analysis, in the .lp format of the lp\_solve free linear programming solver.}
\label{tbl:lp}
\end{table}

We applied the free lp\_solve linear programming solver to the system of inequalities arising from these cases, given by Equations~(\ref{eq:1}) to~(\ref{eq:7}); \autoref{tbl:lp} provides the input given to this solver. We then used a method based on continued fractions to find the exact rational solution values corresponding to the decimal numbers provided by the solver. The optimal solution found in this way has the values $\epsilon=5/23$, $c_3=9/46$, $c_4=1/23$, and $\tau=15/23$. It is straightforward to plug these values back into the given constraints and verify that they provide a valid solution.

To prove that the resulting induced subgraph is planar and has treewidth at most 3, we observe that most steps of the algorithm modify the input only by deleting a vertex; the exceptions are case of a {\em planar} graph (a connected component of the remaining graph isomorphic to $K_4$ or to a two-vertex three-edge multigraph) and the case of a degree-two vertex (which we contract, adding the contracted-away vertex to the eventual solution). Thus, each connected component of the output graph can be formed from $K_4$ or from the three-edge multigraph by reversing the degree-two contraction steps. Each reversal of a contraction replaces an edge by a two-edge path, so each component of the output must be a subdivision of $K_4$ or of the three-edge multigraph. In both cases the result is planar and has treewidth at most three.

\section{No very large, minor-free induced subgraphs}

In this section we prove \autoref{thm: lower_bound}.  To prove
this theorem, we begin with the well-known result that $K_h$-minor
free graphs are sparse~\cite{kostochka1982minimum,
  kostochka1984lower,thomason1984extremal, thomason2001extremal}.
\begin{lemma}[Theorem 1.1~\cite{thomason2001extremal}]\label{lem: Kost1984}
Every simple $K_h$-minor-free graph with $n$ vertices has $O(nh\sqrt{\log h})$ edges.
\end{lemma}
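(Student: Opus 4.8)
The statement is the Kostochka--Thomason upper bound on the extremal function for $K_h$ minors, and since only the $O(\cdot)$ is required here I would aim for the qualitative version rather than the optimal constant. The plan is to prove the contrapositive density statement: there is an absolute constant $C$ such that every graph of average degree at least $Ch\sqrt{\log h}$ has $K_h$ as a minor. A graph with $n$ vertices and $m$ edges has average degree $2m/n$, so this version immediately yields $m \le (C/2)\,nh\sqrt{\log h}$ for every simple $K_h$-minor-free graph. Moreover, $K_h$-minor-freeness is inherited by subgraphs, and every graph of average degree $d$ contains a subgraph of minimum degree strictly greater than $d/2$ (iteratively delete any vertex of degree at most $d/2$; this cannot delete every vertex, since doing so would account for at most $dn/2 < m$ edges). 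Hence it suffices to show that every graph of minimum degree at least $\delta := C'h\sqrt{\log h}$ contains a $K_h$ minor.

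The next step is to pass to a structurally convenient minor. Among all minors of the graph that still have minimum degree at least $\delta$, choose one, $H$, with the fewest vertices. Minimality forces $H$ to be \emph{locally dense}: for a typical edge $uv$, contracting it drops the minimum degree below $\delta$ only at the merged vertex, so $|N_H(u)\cup N_H(v)|$ is small, and combined with $\deg_H(u),\deg_H(v)\ge\delta$ this forces $|N_H(u)\cap N_H(v)|$ to be close to $\delta$ --- adjacent vertices have almost identical neighbourhoods. (The edges for which this reasoning must be argued more carefully, because the minimum degree drops at a common neighbour rather than at the merged vertex, are handled by a further standard reduction on $H$.) This local-density property is precisely what makes a short breadth-first search from a typical vertex reach very many vertices, which is what the final step needs.

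The heart of the argument is to use local density to pack $h$ pairwise-adjacent, vertex-disjoint, connected \emph{branch sets} in $H$, each of size $O(\log h)$; such a packing exhibits a $K_h$ minor and contradicts $K_h$-minor-freeness. I would build the branch sets greedily (or via a first-moment probabilistic argument): maintain branch sets $B_1,\dots,B_s$ and a large reservoir of unused vertices, and to form $B_{s+1}$ start from an unused vertex and grow a connected set inside the reservoir until it acquires a neighbour of every $B_i$. Because the minimum degree is $\delta$ and, by the codegree bound, two-step neighbourhoods are large, a connected set of size $O(\log h)$ grown this way can, with high probability, be made to meet all of $B_1,\dots,B_s$, provided no $B_i$ occupies a disproportionately large fraction of the vertices. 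Balancing the branch-set size $\ell = \Theta(\log h)$ against the degree $\delta = \Theta(h\sqrt{\log h})$ is exactly what makes each of the $h$ rounds succeed with room to spare; the regime in which $H$ has only $O(\delta)$ vertices (so the reservoir is too small for this argument) is handled separately by a direct contraction argument exploiting that such an $H$ is then extremely dense.

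I expect the branch-set packing to be the main obstacle, and it is the step that forces the $\sqrt{\log h}$ factor: one must simultaneously guarantee that the reservoir never runs out (roughly $h\ell \ll |V(H)|$, or else switch to the dense small-graph regime), that each newly grown $B_{s+1}$ stays connected and of size $O(\log h)$, and that it meets all previously chosen branch sets. It is the tension between these three requirements that rules out naive bounds such as $h\sqrt{h}$ or $h\log h$ on the minimum degree and pins the threshold at $\Theta(h\sqrt{\log h})$. The reductions in the first two paragraphs, by contrast, are routine, and since we only need the $O(\cdot)$ bound no effort needs to be spent optimizing the constant.
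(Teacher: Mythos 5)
You should first note that the paper does not prove this lemma at all: it is imported verbatim as Theorem~1.1 of Thomason's 2001 paper (building on Kostochka's earlier bound), and is used as a black box. So there is no in-paper argument to compare against; what you have written is an attempt to reconstruct the Kostochka--Thomason theorem itself, and it has to be judged as such.

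As a reconstruction it follows the right general lines (pass to average degree, then to a minimal minor, exploit local density, pack $h$ connected branch sets of size $O(\log h)$), but two of the decisive steps are genuinely missing rather than merely unoptimized. First, your minimality step is set up with the wrong extremal object: if $H$ is minor-minimal subject to having minimum degree at least $\delta$, then contracting an edge $uv$ is only guaranteed to drop some vertex below degree $\delta$, and that vertex may well be a common neighbour of $u$ and $v$ of degree exactly $\delta$; in that case you learn nothing about $|N(u)\cap N(v)|$. You acknowledge this but defer it to a ``further standard reduction'' that you never specify, and it is not standard in this form. The usual argument avoids the problem by taking $H$ minor-minimal with respect to the edge-density condition $e(H)\ge c\,|H|$, so that contracting any edge must destroy more than $c$ edges and hence every edge lies in roughly $c$ triangles; that accounting is what actually delivers the local-density property. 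Second, the branch-set packing -- growing $h$ pairwise-adjacent connected sets of size $O(\log h)$ from minimum degree $\Theta(h\sqrt{\log h})$ -- is exactly the content of the theorem and the sole source of the $\sqrt{\log h}$ factor, yet it is asserted ``with high probability'' without the failure-probability computation, without a precise statement of what the reservoir/expansion condition is, and with the small, dense regime (where $|V(H)|=O(\delta)$) deferred to an unspecified ``direct contraction argument.'' As it stands the proposal is a plausible roadmap to the known proof, not a proof; if a self-contained argument is wanted, it would be better to follow the density-minimal-minor formulation and carry out the probabilistic growth estimate explicitly, or simply to cite Kostochka/Thomason as the paper does.
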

We will use this result to force the presence of a $K_h$ minor even after deleting many vertices.  The following lemma allows us to densify a graph in terms of its girth (allowing us to use \autoref{lem: Kost1984} to argue the existence of a minor).  Recall that the girth of a graph is the length of the shortest cycle in the graph.  We give a tighter bound on the number of edges in a $K_h$-minor free graph with girth $g$ in \autoref{cor: at most n+O(n/g)}. The proof of \autoref{thm: lower_bound} is then concluded by finding a family of graphs that have sufficiently large girth.

\begin{lemma}\label{lem: |G'|=m-n+n'}
Let $G$ be a graph with $n$ vertices, $m$ edges, and sufficiently large girth $g$. Then it has a minor $G'$ that is a simple graph with $n'\le \frac{5n}{g}$ vertices and $m-n+n'$ edges.
\end{lemma}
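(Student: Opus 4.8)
The plan is to build the minor $G'$ by contracting a carefully chosen forest of $G$ down to a set of "super-vertices," one per tree, and then to count how many edges survive. The natural choice is to take a spanning forest $F$ of $G$, but that would contract everything to a single vertex per connected component, which is too few vertices and loses the edge-count bookkeeping. Instead I would partition the vertex set of $G$ into clusters, each of which induces a connected subgraph of small radius (at most roughly $g/2$), contract each cluster to a single vertex, and then delete any resulting parallel edges and self-loops to obtain a simple minor $G'$. The girth hypothesis is what makes this work: inside a cluster of radius less than $g/2$ there can be no cycle, so each cluster induces a tree; moreover the large girth will be used to control how many edges are lost to parallelism.

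The key steps, in order, are as follows. First, fix a breadth-first-search / greedy clustering that covers $V(G)$ by disjoint balls of radius $< g/2$ around chosen centers; a standard greedy argument (repeatedly pick an uncovered vertex, take its ball, remove it) shows one can do this so that the number of clusters is at most $5n/g$ — here the constant $5$ comes from a packing/averaging bound on how few balls of that radius are needed, using that each ball of radius $\approx g/2$ has at least $\Omega(g)$ vertices (since girth $\ge g$ forces BFS layers to grow, so a ball of radius $r< g/2$ is a tree with at least $r+1$ vertices). This gives $n' \le 5n/g$. Second, observe each cluster induces a subtree of $G$ (no cycles, by the radius/girth bound), so the contracted edges are exactly the edges of these $n'$ vertex-disjoint trees; since a forest on $n$ vertices with $n'$ components has exactly $n-n'$ edges, contracting the clusters removes exactly $n-n'$ edges. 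Third, delete self-loops and parallel edges to make $G'$ simple; the count $m - n + n'$ in the statement is exactly $m$ minus the $n-n'$ contracted edges, so I must argue that contraction creates no self-loops that need separate accounting and, crucially, no two cluster-edges of $G$ between the same pair of clusters — i.e. between any two clusters there is at most one edge of $G$. This last point is where the girth is really needed: two edges between the same pair of clusters, together with the tree-paths inside the two clusters, would form a cycle of length at most $2\cdot(g/2) + 2 = g+2$; refining the radius bound slightly (radius $< (g-2)/2$, say, absorbed into the words "sufficiently large girth") forces this cycle to be shorter than $g$, a contradiction. Hence $G'$ is already simple apart from self-loops, and there are no self-loops between distinct clusters, so $G'$ has exactly $m-(n-n') = m-n+n'$ edges.

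The main obstacle I expect is getting the constant exactly $5$ rather than some larger absolute constant: the clean bound $n' \le 5n/g$ requires a tight packing estimate on the number of radius-$(g/2)$ balls needed to cover $G$, and the cleanest route is probably to build the clusters not as disjoint balls but via a BFS that grows each cluster until it has accumulated enough vertices (at least $g/5$, say) or the whole component is exhausted, then bound the number of clusters by $n/(g/5) = 5n/g$ while simultaneously keeping every cluster's radius below the threshold that would create a short cycle. Balancing these two requirements — "each cluster is large enough to keep the count down" versus "each cluster has small enough radius to stay a tree and avoid double edges" — is the delicate part; the girth being "sufficiently large" is exactly the slack that lets both constraints be met simultaneously, and the factor-$5$ loss is the price of that slack. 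Everything else (the forest edge count, the simplification step) is routine once the clustering is in place.
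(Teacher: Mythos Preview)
Your overall strategy---contract connected clusters of bounded diameter and count the surviving edges---is the right shape, but there is a concrete error in the radius computation, the packing step is genuinely incomplete, and the paper's construction is rather different (and simpler).

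The error: if each cluster has radius $<r$, its diameter is $<2r$, and two $G$-edges between the same pair of clusters together with the two internal paths form a cycle of length $< 2 + 2\cdot 2r = 4r+2$, not $2r+2$. So you need $r$ around $g/4$, not $g/2$; the refinement ``radius $< (g-2)/2$'' does not force the cycle below $g$. With $r\approx g/4$ a BFS ball still has at least $r+1>g/5$ vertices in a connected graph, so the counting could survive, but your greedy covering does not control the leftovers: once a ball is removed the remaining graph can fragment into components with fewer than $g/5$ vertices, each forced to be its own cluster, and the bound $n'\le 5n/g$ breaks. You flag this as ``the delicate part,'' and it is a real gap, not just a detail to be filled in; getting a tight constant out of greedy ball-packing typically needs an extra idea.

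The paper avoids both issues with a one-line pigeonhole. It fixes a rooted spanning tree $T$, sets $\ell=\lfloor (g-3)/4\rfloor$, and chooses an offset $a\in\{0,\dots,\ell-1\}$ so that the union of levels $V_a\cup V_{a+\ell}\cup V_{a+2\ell}\cup\cdots$ (plus the root) has at most $n/\ell\le 5n/g$ vertices; these are the survivors $\mathcal{S}$. Every vertex outside $\mathcal{S}$ is contracted along its parent edge in $T$, so each cluster is a subtree of $T$ of depth $<\ell$ rooted at a vertex of $\mathcal{S}$; the choice of $\ell$ then makes the contracted graph simple. The pigeonhole delivers the constant $5$ with no packing analysis, and since the clusters are the pieces of a spanning tree cut at the $\mathcal{S}$-levels, the ``forest on $n$ vertices with $n'$ components has $n-n'$ edges'' count is automatic. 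Your clustering approach can likely be completed, but only with a worse constant or substantially more work; the level-slicing trick is the missing idea.
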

\begin{proof}
Let $T$ be an arbitrary rooted spanning tree of $G$, let $r$ be the root of $T$, and let $V_i$ be the set of vertices at $i^{th}$ level of $T$. Let $\ell = \lfloor \frac{g-3}{4}\rfloor$.  We choose a non-negative integer $a < \ell$ such that
\begin{equation}
\mathcal{S} =r\cup\left\{\displaystyle{\bigcup_{k\ge 0}} V_{a+k \ell}\right\}
\end{equation}
contains at most $\frac{n}{\ell} \le \frac{5n}{g}$ vertices.  Set $\mathcal{S}$ is a collection of vertices at every
$\ell^{th}$ level starting from level $a$ along with root $r$.

Now we perform the following operation to obtain a minor $G'$ of $G$:
for every vertex $v \in G\setminus\mathcal{S}$ contract the edge $uv$
where $u$ is the parent of $v$ in $T$.  That is, for every $v\in V_i$,
where $i\neq a+k\ell$, we contract $v$ to its ancestor in $V_{i-1}$.
Since the distance between two consecutive levels of vertices in
$\mathcal{S}$ is $\ell$ (or less, as between $r$ and $V_a$) and the girth of $G$ is $g$, contracting these
edges cannot result in self-loops or parallel edges.  Therefore $G'$
is simple.

Since we contract $n-|\mathcal{S}| = n-n'$ edges, the
number of edges in $G'$ is $m- (n-n') = m-n+n'$.
\end{proof}

Consider a graph $G$ with $n$ vertices, maximum degree $3$, and girth $g$.  If $G$ has $n+\omega(\frac{n}{g}h \sqrt{\log h})$ edges, then, by \autoref{lem: |G'|=m-n+n'}, $G$ has a minor $G'$ with $O(\frac{n}{g})$ vertices and $\omega(\frac{n}{g}h \sqrt{\log h})$ edges.  By \autoref{lem: Kost1984}, $G'$ is dense enough to have a $K_h$ minor.  Therefore, we get:

\begin{corollary}\label{cor: at most n+O(n/g)}
Every simple $K_h$-minor-free graph $G$ with $n$ vertices, maximum degree $3$, and girth $g$ has $n+O(\frac{n}{g}h \sqrt{\log h})$ edges.
\end{corollary}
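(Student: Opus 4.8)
The plan is to argue by contraposition: I will show that a $K_h$-minor-free graph $G$ of maximum degree~$3$ and girth~$g$ on $n$ vertices cannot have as many as $n+\omega\bigl(\tfrac{n}{g}h\sqrt{\log h}\bigr)$ edges, since that many edges would force a $K_h$ minor. So suppose for contradiction that $m-n$ grows asymptotically faster than $\tfrac{n}{g}h\sqrt{\log h}$; here and throughout I think of $n$ and $g$ tending to infinity, which is the regime in which the statement has content.

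First I would apply \autoref{lem: |G'|=m-n+n'} to $G$; this is legitimate once $g$ is large enough for the lemma's ``sufficiently large girth'' hypothesis, in particular so that $\ell=\lfloor(g-3)/4\rfloor\ge 1$. The lemma produces a simple minor $G'$ with $n'\le\tfrac{5n}{g}$ vertices and exactly $m-n+n'$ edges. The point is that $G'$ is very dense relative to its size: its edge count exceeds $n'$ by $m-n=\omega\bigl(\tfrac{n}{g}h\sqrt{\log h}\bigr)$, while the bound $n'\le\tfrac{5n}{g}$ gives $n'h\sqrt{\log h}=O\bigl(\tfrac{n}{g}h\sqrt{\log h}\bigr)$. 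Hence $G'$ has $\omega\bigl(n'h\sqrt{\log h}\bigr)$ edges.

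Next I would invoke \autoref{lem: Kost1984} for the graph $G'$: every simple $K_h$-minor-free graph on $n'$ vertices has only $O\bigl(n'h\sqrt{\log h}\bigr)$ edges. Since $G'$ has strictly more edges than this bound allows (for $n$ large enough), $G'$ is not $K_h$-minor-free, i.e.\ $G'$ contains a $K_h$ minor. As $G'$ is itself a minor of $G$ and the minor relation is transitive, $G$ contains a $K_h$ minor, contradicting the hypothesis. Therefore $m-n=O\bigl(\tfrac{n}{g}h\sqrt{\log h}\bigr)$, which is the claimed bound. (The maximum-degree-$3$ hypothesis is not actually needed for this implication; it appears in the statement only because it is the setting of the graph family constructed for \autoref{thm: lower_bound}.)

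Since the whole argument just chains the two preceding lemmas together, there is no substantive obstacle. The only things to watch are the asymptotic bookkeeping --- one must be sure that the $\omega$ in the edge count of $G'$ genuinely beats the $O$ coming from \autoref{lem: Kost1984}, for which the explicit bound $n'\le 5n/g$ (rather than just $n'=O(n/g)$) is the convenient form --- and the implicit regime $g\to\infty$, $n\to\infty$ that makes the $\omega$/$O$ notation meaningful and that licenses the application of \autoref{lem: |G'|=m-n+n'}.
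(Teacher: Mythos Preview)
Your argument is correct and mirrors the paper's own (very terse) proof: assume $m=n+\omega\bigl(\tfrac{n}{g}h\sqrt{\log h}\bigr)$, apply \autoref{lem: |G'|=m-n+n'} to pass to a simple minor $G'$ on $O(n/g)$ vertices with $\omega\bigl(\tfrac{n}{g}h\sqrt{\log h}\bigr)$ edges, and then invoke \autoref{lem: Kost1984} to force a $K_h$ minor in $G'$, hence in $G$. Your added remarks---the explicit asymptotic bookkeeping, the transitivity of minors, and the observation that the maximum-degree-$3$ hypothesis plays no role in the deduction---are accurate and simply make explicit what the paper leaves implicit.
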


\paragraph{Proof of \autoref{thm: lower_bound}.}  Let $G=(V,E)$ be a 3-regular graph with $n$ vertices, $m = \frac{3n}{2}$ edges and girth $\Omega(\log n)$; for example, the Ramanujan graphs have this property~\cite{LPS88}.
In the following, we take $h$ to be a constant.  By \autoref{cor: at most n+O(n/g)}, $G$ has a $K_h$ minor.
Any subgraph $G^*$ (with $m^*$ edges and $n^*$ vertices) of $G$ also has girth $\Omega(\log n)$.  By deleting $k$ vertices, the best we can hope for is that we delete $3k$ edges.  That is, $m^* \ge m-3k$.  To ensure that $G^*$ does not have a $K_h$ minor, we need 
\[ \frac{3n}{2} - 3k = m-3k \le m^* \le n^* + O(n^*/g) = n-k+O\left(\frac{n-k}{\log n}\right)\]
Solving for $k$, we require that
\[k \ge \left(\frac{1}{4}-O(1/\log n)\right) n.\]
Substituting $2m/3$ for $n$ gives the theorem.

\bigskip

\paragraph{Acknowledgments} This material is based upon work supported by the National Science Foundation under Grant Nos.\ CCF-1252833 and CCF-1228639 and by the Office of Naval Research under Grant No. N00014-08-1-1015.  The authors thank Amir Nayyeri for helpful discussions.

{\raggedright
\bibliographystyle{abuser}
\bibliography{planarize}}

\end{document}